\newtheorem{theorem}{Theorem}
\newcommand{\rd}{{\rm d}}
\newcommand{\fref}[1]{figure~\ref{#1}}
\newcommand{\Fref}[1]{Figure~\ref{#1}}
\newcommand{\tref}[1]{table~\ref{#1}}
\newcommand{\Tref}[1]{Table~\ref{#1}}
\newcommand{\sref}[1]{section~\ref{#1}}
\newcommand{\Sref}[1]{Section~\ref{#1}}
\newcommand{\aref}[1]{appendix~\ref{#1}}
\newcommand{\Revise}[1]{{#1}}
\newcommand{\Comment}[1]{
}
\begin{document}


\title{Dynamical nonlinear tails in Schwarzschild black hole ringdown}

\author{Siyang Ling, Sabeela Shah, Sam S. C. Wong}

\affiliation{City University of Hong Kong,\\
Tat Chee Avenue, Kowloon, Hong Kong SAR, China}

\email{siyaling@cityu.edu.hk}
\email{sabeeshah2-c@my.cityu.edu.hk}
\email{samwong@cityu.edu.hk}

\begin{abstract}
  Nonlinear tails in black hole perturbations, arising from second-order effects, present a distinct departure from the well-known Price tail of linear theory.
  We present an analytical derivation of the power law indices and amplitudes for nonlinear tails stemming from outgoing sources, and validate these predictions to percent-level accuracy with numerical simulations.
  We then perform a perturbative analysis on the dynamical formation of nonlinear tails in a self-interacting scalar field model, wherein the nonlinear tails are sourced by a $\lambda \Phi^3$ cubic coupling.
  Due to cascading mode excitations and back-reactions, nonlinear tails with $t^{-l-1}$ power law are sourced in each harmonic mode, dominating the late time behavior of the scalar perturbations.
  In verification, we conducted numerical simulations of the self-interacting scalar model, including all real spherical harmonic (RSH) with $l\leq 4$ and their respective nonlinear couplings.
  We find general agreement between the predicted and numerical power law indices and amplitudes for the nonlinear tails, with the exception of $l=4$ modes, which display $t^{-4}$ power law instead of the predicted $t^{-5}$.
  This discrepancy is due to distortion in the source of the tails, which is caused by another nonlinear effect.
  These results establish nonlinear tails as universal features of black hole dynamics, with implications for gravitational wave astronomy: they may imprint observable signatures in merger remnants, offering novel probes of strong-field gravity and nonlinear mode couplings.
\end{abstract}

\maketitle

\section{Introduction}
\label{sec:introduction}

The advent of gravitational wave (GW) astronomy has ushered in an era of precision tests for strong-field gravity \cite{Berti:2015itd,Berti:2018vdi,Franciolini:2018uyq,LIGOScientific:2020tif}. 
Ground-based detectors like LIGO, VIRGO, and KAGRA, along with the upcoming space-based mission LISA, are reaching sensitivities that enable detailed scrutiny of black hole merger ringdowns \cite{Berti:2005ys,KAGRA:2013rdx,LIGOScientific:2016aoc,KAGRA:2021vkt,LIGOScientific:2023lpe}. 
These observations probe the nonlinear dynamics of spacetime, offering insights into the validity of general relativity and the nature of compact objects. 
A critical aspect of this effort lies in understanding the late-time behavior of perturbations around black holes, where both linear and nonlinear effects leave distinct imprints on the emitted gravitational waves.

Linear black hole perturbation theory (BHPT) has long provided the framework for modeling black hole ringdowns, yielding well-known results such as quasinormal modes (QNMs) and Price’s power law tails \cite{Price:1971fb,Price:1972pw,Leaver:1986gd,Gundlach:1993tp,Gundlach:1993tn,Ching:1995tj,Chandrasekhar:1975zza,Martel:2005ir,Berti:2009kk}.
In order to model ringdown waveforms of astrophysical black holes, numerous efforts had been made toward precision understanding of the quasinormal spectra and Price tails of Schwarzschild and Kerr black holes \cite{Ching:1994bd,Krivan:1996da,Krivan:1997hc,Krivan:1999wh,Burko:2002bt,Burko:2007ju,Hod:2009my,Burko:2010zj,Racz:2011qu,Zenginoglu:2012us,Burko:2013bra,Baibhav:2023clw}.
These linear predictions have been extensively validated through numerical relativity simulations and increasingly precise GW observations.
More recently, nonlinear evolution of black hole perturbations is attracting increasing attention \cite{Sberna:2021eui,Redondo-Yuste:2023seq,Mitman:2022qdl,Cheung:2022rbm,Ioka:2007ak,Redondo-Yuste:2023ipg,Bourg:2024jme,Lagos:2024ekd,Perrone:2023jzq}.
Second-order perturbation theory has revealed phenomena absent in linear treatments, including quadratic quasinormal modes and nonlinear power law tails \cite{Gleiser:1995gx,Gleiser:1998rw,Campanelli:1998jv,Garat:1999vr,Zlochower:2003yh,Brizuela:2006ne,Brizuela:2007zza,Nakano:2007cj,Brizuela:2009qd,Ripley:2020xby,Loutrel:2020wbw,Pazos:2010xf,Khera:2023oyf,Bucciotti:2023ets,Spiers:2023cip,Ma:2024qcv,Zhu:2024rej}.
These findings present exciting opportunities for testing the strong field regime of General Relativity, as well as providing evidence for theories beyond gravity \cite{Berti:2015itd,Berti:2018vdi,Franciolini:2018uyq,LIGOScientific:2020tif,Khera:2024yrk}.

Of particular interest are nonlinear power law tails, which naturally emerge from the outgoing quasinormal profiles present in ringdown dynamics \cite{Okuzumi:2008ej,Lagos:2022otp}.
Recent analytical advances have uncovered the power laws of these nonlinear tails, which are distinct from Price tails and are expected to dominate the late time behavior of black hole ringdowns \cite{Cardoso:2024jme}.
It was found that the power laws of these tails are determined by both the angular number $l$ of the perturbations and the decay exponent $\beta$ of the source.
In addition, recent 3+1 dimensional numerical relativity simulations of black hole mergers have also provided evidence for power law tails distinct from the Price tail \cite{DeAmicis:2024eoy,Ma:2024hzq}.
These developments elucidate the rich phenomenology of nonlinear effects in black hole perturbation theory, calling for new experimental efforts (e.g. GW waveform modeling) to detect their associated signatures.

This paper focuses on the dynamical formation of nonlinear tails; namely, formation of tails due to self-interactions in a scalar field model.
We divide our finding into two main parts.
In the first part, we derive the late time power law indices and amplitudes of nonlinear tails sourced by compact outgoing profiles, and verify them against numerical simulations.
Our analytic results extend and refine earlier works \cite{Okuzumi:2008ej,Lagos:2022otp,Cardoso:2024jme} by providing percent level accurate predictions of tail amplitudes, as well as resolving confusions in the tail power laws.
These improved results allow us to make robust and precise statements on the dynamical nonlinear tails. 
In the second part, we study the dynamical formation of nonlinear tails in a scalar field model with a cubic $\lambda \Phi^3$ coupling via both analytic and numerical means.
Using a perturbative analysis, we argue that nonlinear tails are sourced by nonlinear mode couplings in each harmonic, and provide predictions on their power laws.
In validation, we also perform fully nonlinear numerical simulations that include all harmonics with $l \leq 4$ and their nonlinear mode couplings.
This self-interacting scalar model serves as a prototype for gravitational waves, providing us with insights on nonlinear tails in black hole ringdowns.

We present some highlights of our findings for dynamical nonlinear tail formation.
In our scalar field model, the nonlinear coupling leads to cascading excitation of higher multipoles and back-reactions, generating quasinormal modes that source nonlinear tails.
The perturbative analysis of these processes indicates that all multipoles are eventually dominated by nonlinear tails with $t^{-l-1}$ power laws.
Numerical simulations for the self-interacting scalar model largely confirm predictions in the perturbative analysis, with the exception of $l=4$ multipoles.
For $l \leq 3$ multipoles, we find remarkable accuracy in the predicted power laws and consistency in the amplitudes.
In particular, we observe transition from linear Price tail to nonlinear tail in the initially dominant $(lm)=(11)$ harmonic, which highlights the effect of back-reactions.
For $l=4$ modes, we find $t^{-4}$ power law nonlinear tails instead of the expected $t^{-5}$ tail.
This discrepancy is attributed to nonlinear distortions in the quasinormal waveforms, which in itself lead to open questions on quasinormal waveform modeling.

This paper is organized as follows.
\Sref{sec:nonlinear_tail_analytics} derives the nonlinear tail properties using Green's function methods, comparing our results to prior analytical work.
\Sref{sec:self_interacting_scalar_field} presents analytical analyses and numerical simulations of a fully nonlinear system of self-interacting scalar field, demonstrating how nonlinear tails can be dynamically produced.
We conclude with a discussion of implications for gravitational wave astronomy and open questions in \sref{sec:conclusion}.
Appendices detail the Green's function approximation (\aref{sec:review_of_green_function}), the derivation of sourced tails (\aref{sec:Psi_details}), real spherical harmonic coupling terms (\aref{sec:rsh_couplings}), and the isospectral relationship between Regge-Wheeler and Zerilli equation (\aref{sec:darboux_transform}).

For simplicity, we normalize all dimensionful quantities by the Schwarzschild radius $r_s = 2M$, so that the equations we give consist of only dimensionless quantities.
To convert back to dimensionful quantities, simply multiply the dimensionless quantities by the appropriate $r_s$ factors, e.g. $t \mapsto t r_s$, $r_* \mapsto r_* r_s$ and $\lambda \mapsto \lambda r_s^{-1}$.

\section{Nonlinear tails from Green's function}
\label{sec:nonlinear_tail_analytics}

``Nonlinear tails'' of perturbations around black holes were recently pointed out in ref.~\cite{Okuzumi:2008ej,Lagos:2022otp,Cardoso:2024jme}.
\Revise{Unlike the well known Price tail, nonlinear tails appear in higher order black hole perturbation theory (BHPT). Consider the generic non-linear Regge-Wheeler/Zerilli equation of the form 
\begin{align}
    (\Box + V)\Psi_{l m}  = {\cal O}(\Psi^2, \Psi \partial \Psi, (\partial \Psi)^2)_{lm}, 
\end{align}
the linear order solution $\Psi^{(0)}_{lm}$ generically consist of a part that enters the event horizon and a part that propagates to infinity. The portion entering the horizon is irrelevant for our analysis. By plugging in the linear order outgoing wavepacket in perturbation theory, the higher order wave equation would be of the form 
\begin{align}
      (\Box + V)\Psi_{l m}  = Q(t, r_*)
\end{align}
with 
\begin{align}
   Q(t, r_*)\sim   \frac{F((t-t_i)-(r_*-r_i))}{r^\beta}.
\end{align}
This structure leads to a late-time power-law tail whose index generally differs from the Price tail.
We emphasize that the source term described above is generic within perturbation theory and also naturally includes scenarios involving outgoing matter sources.

In the following, we systematically derive the power law index and amplitude of nonlinear tails sourced by a broad class of outgoing profiles, thereby extending existing results in the literature.}

\subsection{The approximate Green's function}
\label{sec:G_analytical_approximation}
In this subsection, we review and extend some facts on the Green's function of the Regge-Wheeler equation.
We shall focus on the free propagation Green's function $G_F$, which is most relevant to the sourcing of nonlinear tails.

The Regge-Wheeler equation for scalar field with a source term $Q$ is:
\begin{align}
  \label{eq:rw_equation_with_source}
  & (\partial_t^2 - \partial_{r_*}^2 +V(r_*)) \Psi(t,r_*) =  Q(t,r_*) \nonumber \\
  & V(r_*) = l (l+1) \frac{r-1}{r^3} + \frac{r-1}{r^4} \;,
\end{align}
where $r$ is the radial coordinate, $r_* = r + \ln(r - 1)$ is the tortoise coordinate, and $l$ is the azimuthal angular number.
The causal Green's function is defined via:
\begin{align}
  \label{eq:green_function_definition}
  & (\partial_t^2 - \partial_{r_*}^2 + V(r_*)) G(t,r_*;t',r_*') =  \delta(t-t') \delta(r_*-r_*') \qq{such that} \nonumber \\
  & G(t,r_*;t',r_*') = 0 \qq{for} t < t' \;.
\end{align}
The hyperbolic nature of the equation implies causality, that is $G(t,r_*;t',r_*')=0$ for $t-t' < \abs{r_*-r_*'}$.

Conventionally, the Green's function is decomposed into three parts, namely free propagation $G_F$, quasinormal modes $G_Q$, and branch cut $G_B$:
\begin{align} \label{eqn:GreenFQB}
  G(t,r_*;t',r_*') = G_F(t,r_*;t',r_*') + G_Q(t,r_*;t',r_*') + G_B(t,r_*;t',r_*') \;.
\end{align}
These contributions stem from analytic properties of the frequency (Laplace) space Green's function $G(s,r_*,r_*') \equiv \int_{t'}^\infty e^{-s(t-t')} G(t,r_*;t',r_*') \dd{t}$.
More specifically, $G_Q$ arises from poles of $G(s,r_*,r_*')$ on the $\Re[s] < c$ (for some $c >0$) half plane \footnote{The Green's function should be analytic on the $\Re[s] > c$ plane. Usually $c=0$, but there are also cases where poles appear at $\Re[s]>0$~\cite{Hui:2019aox}. There are also Kramers-Kronig relations associated with this analyticity in the right half plane \cite{DeLuca:2024ufn}.}, $G_B$ arises from the branch cut discontinuity of $G(s,r_*,r_*')$ across the negative real $s$ axis, and $G_F$ is due to the pole-like contribution of $G(s,r_*,r_*')$ at $s=0$. 
Note that: since $s=0$ is the branch point for $G(s,r_*,r_*')$, a generic contour around $s=0$ would not form a closed loop on the Riemann surface, and the integral would be sensitive to the contour.
To avoid this ambiguity, we define $G_F$ via:
\begin{align}
  G_F(t,r_*;t',r_*') = \lim_{R \to 0} \frac{1}{2\pi i} \int_{-\pi+0}^{\pi-0} e^{R e^{i\theta} (t-t')} G(R e^{i\theta},r_*,r_*') R e^{i\theta} i \dd{\theta} \;.
\end{align}
As we will see, $G_F$ is responsible for sourcing nonlinear tails.
\begin{figure}
    \centering
    \includegraphics[width=0.5\linewidth]{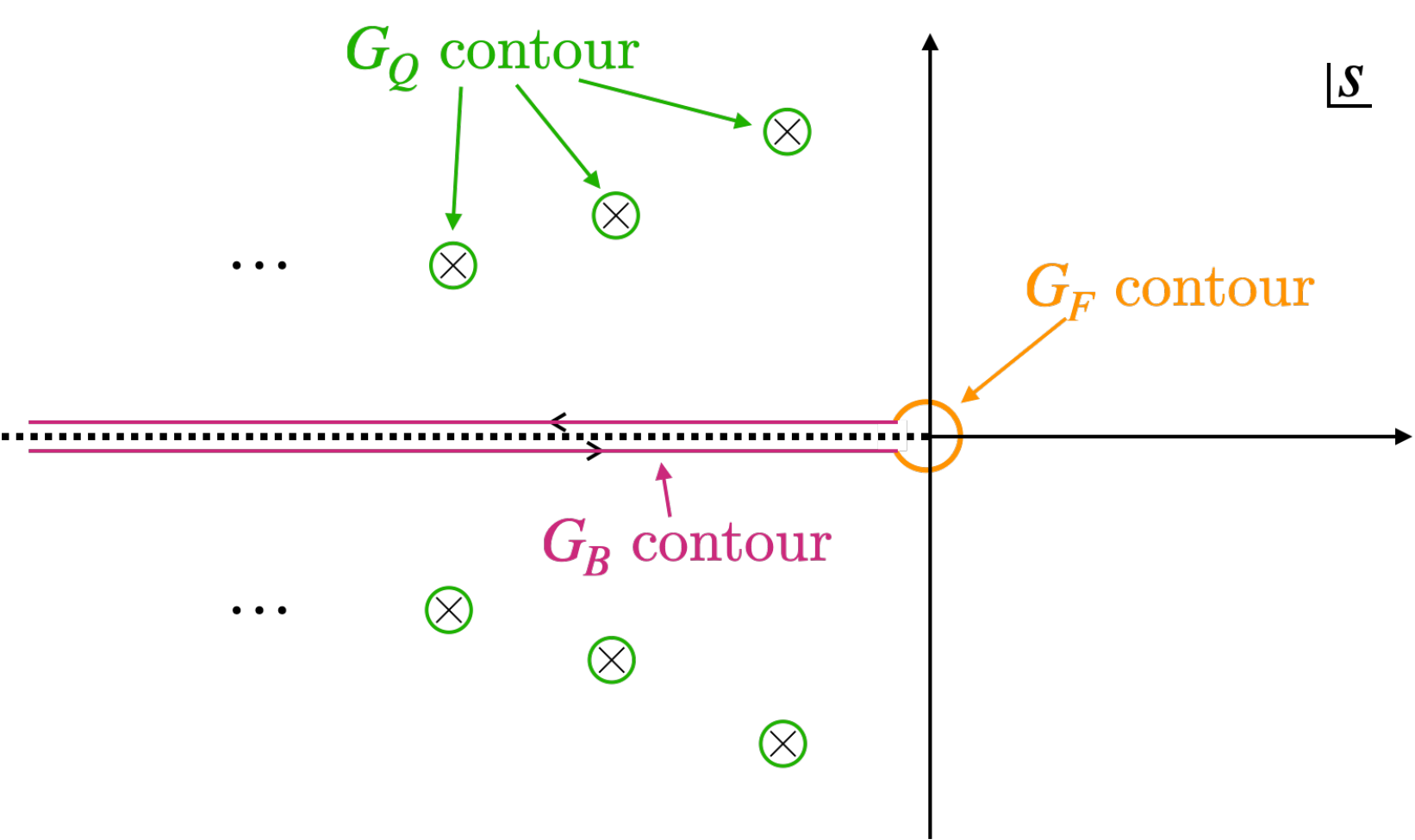}
    \caption{Contours giving rise to corresponding parts of the Green's function in \eqref{eqn:GreenFQB}.}
    \label{fig:Gcontours}
\end{figure}

We first introduce an approximation scheme to the Green's function by exploiting the causal structure of the equation.
Intuitively, a wavepacket propagating in the $r_* \gg 1$ region is only sensitive to the local potential $V(r_*)$.
Since $V(r_*) \approx l(l+1) / r^2$ for $r_* \gg 1$, the evolution of the wavepacket should be approximately governed by equation \eqref{eq:rw_equation_with_source}, but with potential $V$ replaced with $V'(r_*) = l(l+1)/r^2$.
Framed in terms of the Green's function, we expect $G(t,r_*;t',r_*') \approx G'(t,r_*;t',r_*')$, where $G'$ is the Green's function corresponding to the potential $V'$.
To determine the regime of validity of this approximation, we must carefully take into account the causal structure of the equation.
See \fref{fig:green_function_domain_demo} for an illustration of the relevant causal domains.
From the figure, one can tell that the Green's function $G(t,r_*;t',r_*')$ is only sensitive to the value of $V(r_*)$ for $r_* \in [d_1,d_2]$, where $d_1 = (r_*+r_*'-(t-t')) / 2$.
Therefore, as long as $ r_* + r_*' - (t-t') \gg 1$ ($d_1 \gg 1$), $G \approx G'$ is a good approximation.
Physically, the approximation remains valid until the quasinormal mode excited by the source at $(t',r_*')$ emanates from the light ring and propagates to $r_*$.
We present this argument with more mathematical rigor in \aref{sec:causality_arguments}.

\begin{figure}
  \includegraphics[width=0.7\linewidth]{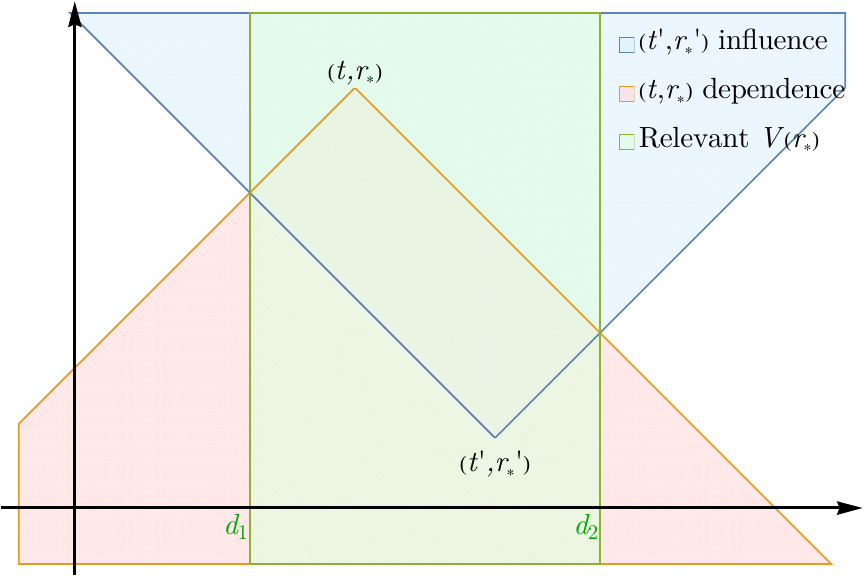}
  \caption{
    The domain of dependence at $(t,r_*)$ is shaded in pink, and the domain of influence from $(t',r_*')$ is shaded in blue.
    For given $(t,r_*)$ and $(t',r_*')$, the Green's function $G(t,r_*;t',r_*')$ is insensitive to the potential outside the green region $\mathbb{R} \times [d_1, d_2]$.
    In the case of Regge-Wheeler potential $V$, we have $V \approx l(l+1) / r^2$ on the region $[d_1,d_2]$ if $d_1 \gg 1$, in which case $G(t,r_*;t',r_*')$ can be approximated by the Green's function corresponding to the $l(l+1)/r^2$ potential.}
  \label{fig:green_function_domain_demo}
\end{figure}

Using the above scheme, we have derived the following approximation for $G_F$ in the case $r_*, r_*' \gg 1$:
\begin{align}
  \label{eq:GF_expressions}
  & \ G(t,r_*;t',r_*') \approx G_F(t,r_*;t',r_*')  \nonumber \\
\approx & \ \frac12 \sum_{n=0}^{l} \frac{(-1)^n (t-t')^{2n}}{2^{l-n} (2n)!}
  \sum_{k=0}^{l-n} r^{2k - l} (r')^{2(l-n-k) - l} \frac{ (2l-1-2k)!! }{k!}
          \frac{ (2n-1+2k)!! }{(l-n-k)!} \nonumber \\
  & \ \textrm{for } \abs{r_* - r_*'} < t-t' < r_* + r_*'\;.
\end{align}
Here, $G \approx G_F$ because the Laplace space Green's function corresponding to $V'(r_*) = l(l+1)/r^2$ have neither quasinormal modes nor branch cut, but only a pole at $s=0$.
The sum expression was derived by explicitly computing the $s=0$ pole contribution.
Details of the derivation appear in \aref{sec:GF_approximation}.
We also performed checks on the approximation by comparing it with a numerically obtained Green's function; see \fref{fig:green_function_demo}.

\begin{figure}
  \centering
  \includegraphics[width=0.7\linewidth]{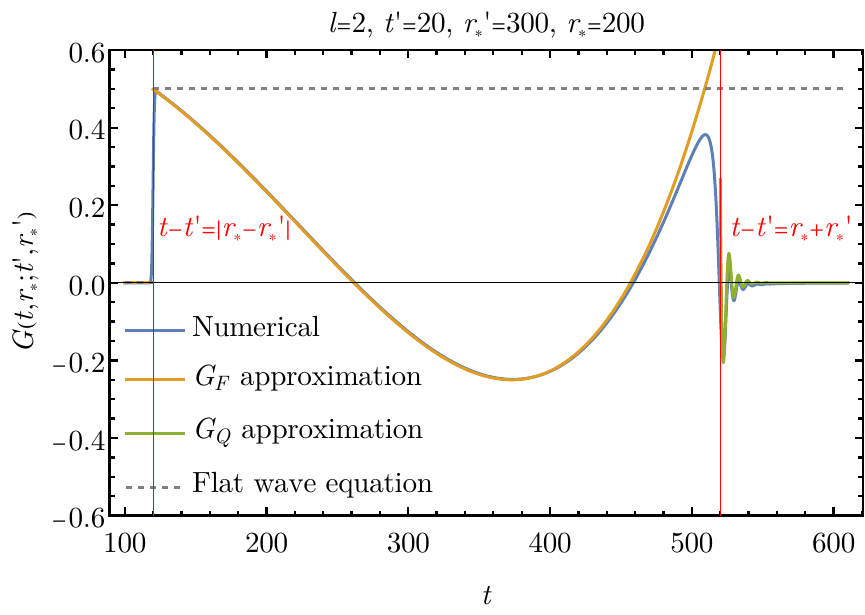}
  \caption{
    Illustration of the $G_F$ approximation \eqref{eq:GF_expressions}.
    For given $l,r_*,r_*',t'$, the $G_F$ approximation \eqref{eq:GF_expressions} (yellow curve) is compared against the full Green's function $G(t,r_*;t',r_*')$ obtained numerically (blue curve).
    The numerical Green's function was obtained by integrating equation \eqref{eq:green_function_definition}, with Dirac deltas replaced by Gaussian distributions with standard deviation $0.5$.
    We see good match between \eqref{eq:GF_expressions} and $G(t,r_*;t',r_*')$ over the range $\abs{r_*-r_*'} < t-t' < r_* + r_*'$.
    The gray dashed line is the Green's function for the flat wave equation, namely $\Theta(t-t'-\abs{r_*-r_*'}) / 2$.
    That the gray line deviates from the blue and yellow curves indicates that $V=0$ is not a sufficiently good approximation.
    The green curve is the predicted quasinormal mode $G_Q$, with quasinormal frequency and excitation factors obtained from table 1 and table 3 of ref.~\cite{Berti:2006wq}; the waveform matches the numerical Green's function.
  }
  \label{fig:green_function_demo}
\end{figure}

Past works often approximate $G_F$ by the flat wave equation Green's function $ \Theta(t-t'-\abs{r_*-r_*'}) / 2$, on the grounds that the potential $V(r_*)$ is negligible for large $r_*$~\cite{Lagos:2022otp,Okuzumi:2008ej}.
However, as is evident from \fref{fig:green_function_demo}, the actual $G_F$ deviates significantly from this flat space approximation.
This deviation occurs because even a small potential $V(r_*)$ can lead to large effects over a long period of time.
In the short time limit, \eqref{eq:GF_expressions} reduces to $G_F \approx \left[1 + \order{(t-t')^2}\right] \times \Theta(t-t'-\abs{r_*-r_*'}) / 2$, which is just the flat space Green's function.
As discussed in ref.~\cite{Cardoso:2024jme}, the flat space approximation is not accurate enough for studying nonlinear tails.
We will show that \eqref{eq:GF_expressions} yields accurate predictions on nonlinear tails.

We further comment on literature works related to approximation \eqref{eq:GF_expressions}.
In ref.~\cite{Barack:1998bw}, the author provides an approximate Green's function in eq.~(28) and (29):
\begin{align}
  \label{eq:Barack_GII_expression}
  G^{I}(u,v; u', v') &= \sum_{n=0}^{l} A_n^l \frac{\frac{\partial^n}{\partial u^n} g_G^{I} (u; u', v')}{(v - u)^{l-n}}, \nonumber \\
  g_G^{I} (u; u', v') &= \frac{1}{l!} \left( \frac{(v' - u)(u - u')}{v' - u'} \right)^l, \quad A_n^l = \frac{(2l - n)!}{n!(l - n)!} \;,
\end{align}
where $u = t - r_*$ and $v = t + r_*$ (similar for $u',v'$) are light cone coordinates.
This Green's function \eqref{eq:Barack_GII_expression} was used to derive nonlinear tails in ref.~\cite{Cardoso:2024jme}.
We found that the sum expression in \eqref{eq:GF_expressions} is the same as $G^{I} / 2$ with $r_*$'s replaced with $r$.\footnote{The $1/2$ factor appears because ref.~\cite{Barack:1998bw} use this alternate convention for the Green's function: $(\partial_u \partial_v + V) G = \delta(u-u') \delta(v-v')$.}
To understand this equality, note that ref.~\cite{Barack:1998bw} approximates the long range potential as $V(r_*) = l(l+1) / r_*^2$, whereas we approximate it as $V(r_*) = l(l+1) / r^2$.
It is thus not surprising that we and the author of \cite{Barack:1998bw} found similar approximations for $G_F$.
Numerically, we found that our approximation \eqref{eq:GF_expressions} is slightly more accurate than \eqref{eq:Barack_GII_expression}.

We will avoid detailed discussion of $G_Q$ and $G_B$ in this work, since they are irrelevant to the nonlinear tail.
The only fact to keep in mind is that the branch cut Green's function $G_B$ sources the Price tail~\cite{Price:1971fb,Leaver:1986gd}, and $G_B \sim t^{-2l-3}$ at late times.
As we will see, depending on the power law of the nonlinear tail, the Price tail could dominate over the nonlinear tail.
A brief review of known results on $G_Q$ and $G_B$ appear in \aref{sec:GQ_and_GB}.

\subsection{Response to an outgoing source}
\label{sec:Psi_analytical_approximation}
To study nonlinear tails, we consider a generic outgoing source:
\begin{align}
  \label{eq:outgoing_source}
  Q(t,r_*) = \frac{F((t-t_i)-(r_*-r_i))}{r^\beta} \Theta(t-t_i) \;,
\end{align}
where $F$ is a function supported on the interval $[-\sigma,\sigma]$.
The source $Q$ describes a profile that switches on at $t_i$ and moves outward at light speed.
We assume $t_i, r_i > 0$, which means the source is active only outside the light ring.

The response due to this source is given by: \footnote{Note that we do not include the homogeneous part of the solution, which is sensitive to the initial conditions \cite{Chavda:2024awq}.}
\begin{align}
  \label{eq:Psi_integral_full}
  \Psi(t,r_*) = \int_{t_i}^t \int_{r_* - (t-t')}^{r_* + (t-t')} [ G_F(t,r_*;t',r_*') + G_Q(t,r_*;t',r_*') + G_B(t,r_*;t',r_*') ] Q(t',r_*') \dd{r_*'} \dd{t'} \;.
\end{align}
 The $G_F$, $G_Q$ and $G_B$ contributions to $\Psi$ will be denoted by $\Psi_F$, $\Psi_Q$ and $\Psi_B$.
This section focuses on $\Psi_F$, where the approximation \eqref{eq:GF_expressions} is valid and $G_F$ is dominant.
See \fref{fig:integration_region} for an illustration on integration domains.
\begin{figure}[t]
  \centering
  \includegraphics[width=0.64\textwidth]{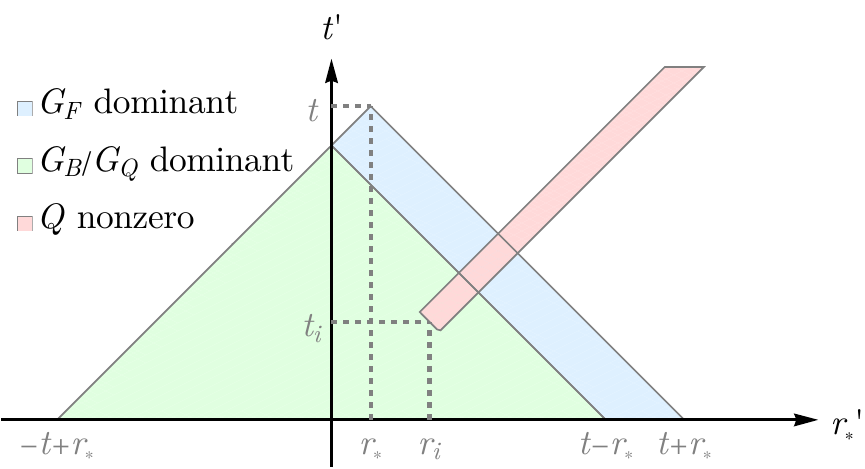}
  \includegraphics[width=0.35\textwidth]{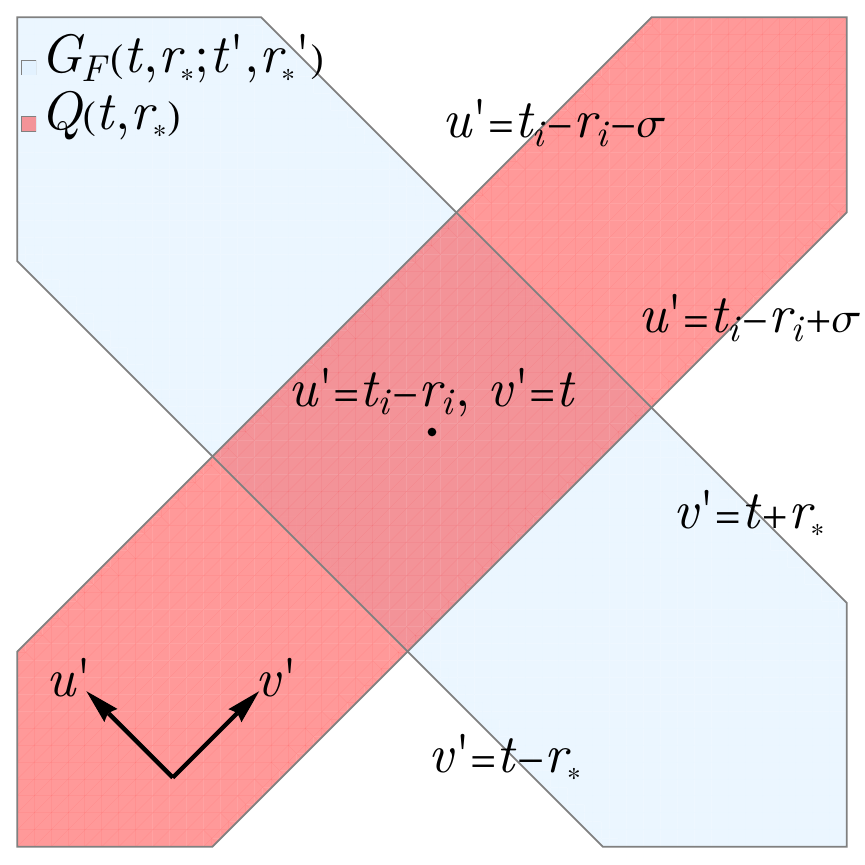}
  \caption{Spacetime regions pertinent to the integral \eqref{eq:Psi_integral_full}.
    The red band illustrates the region where the source \eqref{eq:outgoing_source} is nonzero.
    The blue band illustrates the causal region with $t-r_* \leq v' \leq t+r_*$, where $G_F$ is dominant and $\Psi_F$ is sourced.
  The green region illustrates the causal region where $G_B$ or $G_Q$ are relevant.}
  \label{fig:integration_region}
\end{figure}

\paragraph{$\Psi_F$ derivation}
We first discuss the integration domain relevant for the $G_F$ contribution.
The Green's function $G_F$ is dominant when $-r_* \leq t-t'-r_*' \leq r_*$, and the source $Q$ is nonzero only if $-\sigma \leq (t'-t_i)-(r_*'-r_i) \leq \sigma$.
In terms of the light cone coordinates $u' = t'-r_*'$ and $v' = t' + r_*'$, the domain is the rectangle defined by $t - r_* \leq v' \leq t + r_*$ and $-\sigma + (t_i-r_i) \leq u' \leq \sigma + (t_i-r_i)$.
See \fref{fig:integration_region} for an illustration.
$\Psi_F$ is thus an integral in the light cone coordinates:
\begin{align}
  \Psi_F(t,r_*)
  =  \frac12 \int_{t-r_*}^{t+r_*} \int_{t_i-r_i-\sigma}^{t_i-r_i+\sigma} G_F(t,r_*; t', r_*') Q(t',r_*') \dd{u'} \dd{v'} \;.
\end{align}
The center of the region is $t' = (t+t_i-r_i)/2$, $r_*' = (t+r_i-t_i)/2$, both of which are approximately $t/2$ for $t \gg t_i, r_i$.
In light cone coordinates, the center is $u' = t_i-r_i, v' = t$.

For convenience, we change integration variables from $u'$, $v'$ to $w' \equiv t'-r' - \ln(t/2)$ and $z' \equiv t'+r' + \ln(t/2)$, where $r'$ is the radial coordinate.
By previous discussion on the integration region, in the late time limit ($t \to \infty$) we have $v' \approx t$.
Dropping terms much smaller than $\ln(t)$ (including constant terms such as $r_i,t_i,\sigma,r_*$), we have:
\begin{align}
  & w' \approx u',\quad z' \approx v', \nonumber \\
    & t_i-r_i-\sigma \lesssim w' \lesssim t_i-r_i+\sigma  , \quad
      t-r  \lesssim z' \lesssim t+r  \;.
\end{align}
Thus the integration region in terms of $w'$, $z'$ is approximately rectangular, similar to that for $u'$, $v'$.
The integral is now given by:
\begin{align}
  \label{eq:Psi_F_determinant_factor}
  & \Psi_F(t,r_*)
  \approx  \frac12 \int_{t-r}^{t+r} \int_{t_i-r_i-\sigma}^{t_i-r_i+\sigma} G_F(t,r_*; t', r_*') Q(t',r_*') \det(\pdv{\{u',v'\}}{\{w',z'\}}) \dd{w'} \dd{z'} \nonumber \\
  &\textrm{where } \det(\pdv{\{u',v'\}}{\{w',z'\}}) = \dv{r_*'}{r'} = \left(1 - \frac{1}{r}\right)^{-1} = 1 + \sum_{k=1}^\infty \frac{1}{r^k} \;.
\end{align}
Note that the Jacobian for the coordinate change introduces powers in $1/r$, effectively changing the source to $Q(t',r_*') (1 + 1/r + \order{r^{-2}})$.
We will see that corrections from the Jacobian can sometimes be important for determining the late time behavior of $\Psi_F$.

The power law and amplitude of the nonlinear tail can now be derived.
In the limit $t \gg r_*, r_*', \sigma$ , $G_F / (r')^\beta$ is approximately constant in $w'$ (and $u'$) over the integration region discussed above.
\footnote{This can be seen from \eqref{eq:psi_pm_via_Hankel} and \eqref{eq:G_prime_contour_integral}.  Note that $r'$ and $t'$ dependence in the integrand of \eqref{eq:G_prime_contour_integral} is of the form $e^{-s (t' + r')} p_+((i s r')^{-1}) $, which is roughly constant in $w'$ within the integration region when $t$ is large.  } 
On the other hand, $F((t'-t_i)-(r_*'-r_i))$ is constant in $z'$ (and $v'$).
Thus the integral approximately factorizes into:
\begin{align}
  \label{eq:Psi_F_integral_u_v}
  \Psi_F(t,r_*)
  \approx \ & \sum_{k=0}^\infty I_{l,\beta+k}(t,r_*) \int_{-\sigma}^\sigma F(x) \dd{x} \qq{where} \nonumber \\
  I_{l,\beta}(t,r_*) = \ &  \frac12  \int_{t-r}^{t+r} \frac{G_F(t,r_*;w'=t_i-r_i, z')}{(z'/2)^\beta} \dd{z'} \;.
\end{align}
From above, one can see that $\Psi_F$'s amplitude is proportional to the integral $\int F(x) \dd{x}$, and the exact shape of $F(x)$ does not matter.
Moreover, the $r$ and $t$ power law of $\Psi_F$ depends entirely on the integrals $I_{l,\beta}(t,r_*)$.
The summation in $I_{l,\beta+k}$ comes from the Taylor expansion of the determinant in \eqref{eq:Psi_F_determinant_factor}.
For reference, we list the late time asymptotics of $I_{l,\beta}(t,r_*)$ in \tref{tab:nonlinear_tail_amplitudes}.
One could see that: for fixed $l$, $I_{l,\beta}(t,r_*) \sim r^{l+1} t^{-l-\beta}$ at late times for $\beta \leq 1$ and $\beta \geq l+2$.
On the other hand, $I_{l,\beta}(t,r_*) = 0$ for fixed $l$ and $2 \leq \beta \leq l+1$.
In combination with \eqref{eq:Psi_F_integral_u_v}, we have: for $\beta \leq 1$ or $\beta \geq l+2$, the late time contribution to $\Psi_F$ is dominated by $I_{l,\beta}$ ($k=0$), which means $\Psi_F \sim t^{-l-\beta}$; for $2 \leq \beta \leq l+1$, the dominating contribution to $\Psi_F$ is $I_{l,l+2}$ ($k=l+2-\beta$), which means $\Psi_F \sim t^{-2l-2}$.

The late time $\Psi_F$ power laws largely align with the analytic and numerical results of ref.~\cite{Cardoso:2024jme}.\footnote{See equation (9), (10), (58), (59) and (14) of ref.~\cite{Cardoso:2024jme}.}
Specifically, the authors of ref.~\cite{Cardoso:2024jme} derived a $t^{-l-\beta}$ power law for $\beta \leq 1$ and $\beta \geq l+2$, which matches our results above.
However, they also found a $t^{-l-\beta-1}$ power law for $2 \leq \beta \leq l+1$, whereas we found $\Psi_F \sim t^{-2l-2}$.
There are two reasons for the discrepancy between our result and that of ref.~\cite{Cardoso:2024jme}.
Firstly, the determinant $\dv*{r_*'}{r'}$ coming from the change of variable in \eqref{eq:Psi_F_determinant_factor} was not accounted for in ref.~\cite{Cardoso:2024jme}, so effectively their source neglects additional $1/r^k$ contributions.
Secondly, for $2 \leq \beta \leq l+1$, our calculation gives $I_{l,\beta}(t,r_*) = 0$, whereas ref.~\cite{Cardoso:2024jme} found a $t^{-l-\beta-1}$ power law for the corresponding integral.
Notably, the numerical results in ref.~\cite{Cardoso:2024jme} are consistent with our analytic predictions, including the $t^{-2l-2}$ power laws in the disputed parameter range $2 \leq \beta \leq l+1$.

\renewcommand{\arraystretch}{1.5}
\begin{table}[t]
  \centering
  \begin{tabular}{|c|c|c|c|c|c|c|c|}
    \hline
    $I_{l,\beta}(t,r_*)$ & $\beta=0$ & $\beta=1$ & $\beta=2$ & $\beta=3$ & $\beta=4$ & $\beta=5$ & $\beta=6$ \\
    \hline
    $l=0$&$\frac{r}{2}$&$\frac{r}{t}$&$\frac{2 r}{t^2}$&$\frac{4 r}{t^3}$&$\frac{8 r}{t^4}$&$\frac{16 r}{t^5}$&$\frac{32 r}{t^6}$\\$l=1$&$\frac{r^2}{3 t}$&$\frac{r^2}{3 t^2}$&$0$&$-\frac{4 r^2}{3 t^4}$&$-\frac{16 r^2}{3 t^5}$&$-\frac{16 r^2}{t^6}$&$-\frac{128 r^2}{3 t^7}$\\$l=2$&$\frac{r^3}{5 t^2}$&$\frac{2 r^3}{15 t^3}$&$0$&$0$&$\frac{16 r^3}{15 t^6}$&$\frac{32 r^3}{5 t^7}$&$\frac{128 r^3}{5 t^8}$\\$l=3$&$\frac{4 r^4}{35 t^3}$&$\frac{2 r^4}{35 t^4}$&$0$&$0$&$0$&$-\frac{32 r^4}{35 t^8}$&$-\frac{256 r^4}{35 t^9}$\\$l=4$&$\frac{4 r^5}{63 t^4}$&$\frac{8 r^5}{315 t^5}$&$0$&$0$&$0$&$0$&$\frac{256 r^5}{315 t^{10}}$ \\
    \hline
  \end{tabular}
  \caption{Late time asymptotics of $I_{l,\beta}(t,r_*)$, which determine the spatial and time dependence of the nonlinear tail.
    For non-zero entries, the late time asymptotics follow $\sim r^{l+1} t^{-\beta-l}$.
    For $2 \leq \beta \leq l+1$, we have $I_{l,\beta}(t,r_*) = 0$.
    Also see \eqref{eq:I_integral_examples} for example derivation of these asymptotics.}
  \label{tab:nonlinear_tail_amplitudes}
\end{table}

The nonlinear tails derived in ref.~\cite{Okuzumi:2008ej,Lagos:2022otp} diverge from ours due their use of an approximate free propagation Green's function $G_F$.
In particular, the authors of ref.~\cite{Okuzumi:2008ej,Lagos:2022otp} found a $t^{-2}$ tail for a source decaying like $r_*^{-2}$ (corresponding to $\beta = 2$), which lacks the $l$-dependence seen in our results.
This discrepancy arises because ref.~\cite{Okuzumi:2008ej,Lagos:2022otp} approximate $G_F$ using the flat wave equation Green's function $G_F(t,r_*;t',r_*') \approx \Theta(t - t' - \abs{r_*-r_*'}) / 2$, which is independent of $l$.
As discussed in \sref{sec:G_analytical_approximation} and shown in \fref{fig:green_function_demo}, this flat approximation breaks down when $t-t'$ becomes comparable with $r_*, r_*'$.
Our analysis demonstrates that the form of $G_F$ plays a critical role in determining the power law behavior of nonlinear tails.

\paragraph{$\Psi_Q$ and $\Psi_B$ contribution}
It is well known that $G_Q$ consists of decaying complex exponentials, namely the quasinormal mode spectrum \cite{Leaver:1986gd}.
Since the source $Q$ activates at $(t_i,r_i)$, we expect the earliest quasinormal oscillations to occur around $t \approx t_i + r_i + r_*$.
We claim that, at late times, $\Psi_Q$ is negligible and is dominated by $\Psi_F$ or $\Psi_B$.
While we cannot prove this claim analytically, it is evidenced by our numerical simulations and the analytic calculations in ref.~\cite{Barack:1998bw}.
Additional details appear in \aref{sec:Psi_Q_details}

The branch cut Green's function $G_B$ has power law $G_B \sim t^{-2l-3}$ at late times, yielding a $\Psi_B \sim ^{-2l-3}$ contribution.
This is the known as the Price power law.
For fixed $l$ and $\beta \geq l+4$, $\Psi_B$ dominates over $\Psi_F$ at late times.
Additional details appear in \aref{sec:Psi_B_details}.

\paragraph{Summary}
For fixed $l$, the late time power behavior of $\Psi$ is summed up as follows:
\begin{framed}
\begin{itemize}
\item[1.] For $\beta \leq 1$, the late times power law is $t^{-\beta-l}$.
  $\Psi$ is dominated by the sourced tail $\Psi_F$, with its amplitude given by eq.~\eqref{eq:Psi_F_integral_u_v} and \tref{tab:nonlinear_tail_amplitudes}.
\item[2.] For $2 \leq \beta \leq l+2$, the late times power law is $t^{-2l-2}$.
  $\Psi$ is dominated by the sourced tail $\Psi_F$, with its amplitude given by $I_{l,l+2}$ in \tref{tab:nonlinear_tail_amplitudes}.
\item[3.] For $\beta = l+3$, both $\Psi_F$ and $\Psi_B$ yield a $t^{-2l-3}$ power law, so $\Psi \sim t^{-2l-3}$.
\item[4.] For $\beta \geq l+4$, $\Psi_B$ dominates over $\Psi_F$, so $\Psi \sim t^{-2l-3}$.
\end{itemize}
\end{framed}

\subsection{Numerical results}
\label{sec:sourced_tail_numerical}
We numerically solved the sourced Regge-Wheeler equation \eqref{eq:rw_equation_with_source} with the source $Q$ being the compact outgoing profile \eqref{eq:outgoing_source}.
More specifically, given $\beta$, the source $Q$ was \eqref{eq:outgoing_source} with the following parameters:
\begin{align}
  \label{eq:F_x_definition}
  t_i = 10,\quad r_i = 10,\quad
  F(x) = \Revise{ \frac{1}{\sqrt{2\pi} \sigma_F} \exp(\frac{-x^2}{2 \sigma_F^2}) \qq{where} \sigma_F = 0.5} \;.
\end{align}
We take zero initial conditions $\Psi(t=0)=0$ and $\dot{\Psi}(t=0)=0$, and evolved $\Psi$ up to $t=1000$.
The spatial resolution was $h = 0.03$ over domain $r_* \in [-600,1200]$, and 2nd order spatial derivatives were approximated using a 4th order finite difference scheme.
Time steps were fixed to $\Delta{t} = 0.01$, with time evolution performed using an 8th order Runge-Kutta method.
Quadruple precision numbers were used for these simulations.
With these settings, we numerically solved the equations for all combinations $(l,\beta)$, where $0 \leq l \leq 4$ and $0 \leq \beta \leq 6$.
Our code is released at \url{https://github.com/hypermania/BlackholePerturbations}.

\Fref{fig:sourced_Psi_loglog} gives the numerical results for the $\abs{\Psi}$ evolution extracted at $r_*=50$, for $l=1$ and all $\beta$'s.
\footnote{\Revise{The extraction point $r_*$ must satisfy $r_* \gg 1$ for a valid comparison against our analytic results, but otherwise a smaller $r_*$ is preferred to maximize the temporal window for observing $\Psi(t,r_*)$.  We have thus chosen $r_* = 50$. }}
\Revise{Since $F(x)$ \eqref{eq:F_x_definition} is taken to be a Gaussian, equation \eqref{eq:Psi_F_integral_u_v} yields $\Psi_F \approx \sum_{k=0}^{\infty} I_{1,\beta+k}$, which is dominated by $I_{1,\beta}$ (for $\beta \neq 2$) or $I_{1,3}$ (for $\beta = 2$) at late times. For comparison, \fref{fig:sourced_Psi_loglog} also gives $I_{1,\beta}(t,r_*)$ at $r_*=50$ as dashed curves.  }
It is clear that $\Psi(t,r_*)$ exhibit power law tails at late times.
For $\beta \leq 4$, the late time amplitude of $\Psi(t,r_*)$ coincides with prediction to percent level, demonstrating that $\Psi_F$ is indeed the dominant contribution at late times.
This agreement is illustrated by the close resemblance of $\abs{\Psi}$ curves to the $I_{l,\beta}$ lines, which are the predicted late time values of $\abs{\Psi}$.
For $\beta \geq 5$, we see domination by the Price tail at late times.
For $2 \leq \beta \leq 3$, the power law is given by $-2l-2 = -4$, as expected.
We also observe oscillations in the waveform at early times due to the sourced linear quasinormal modes.
The quasinormal oscillations are more pronounced for higher $\beta$'s, possibly because the source is more concentrated in spacetime for these values.
\begin{figure}[t]
  \centering \includegraphics[width=\textwidth]{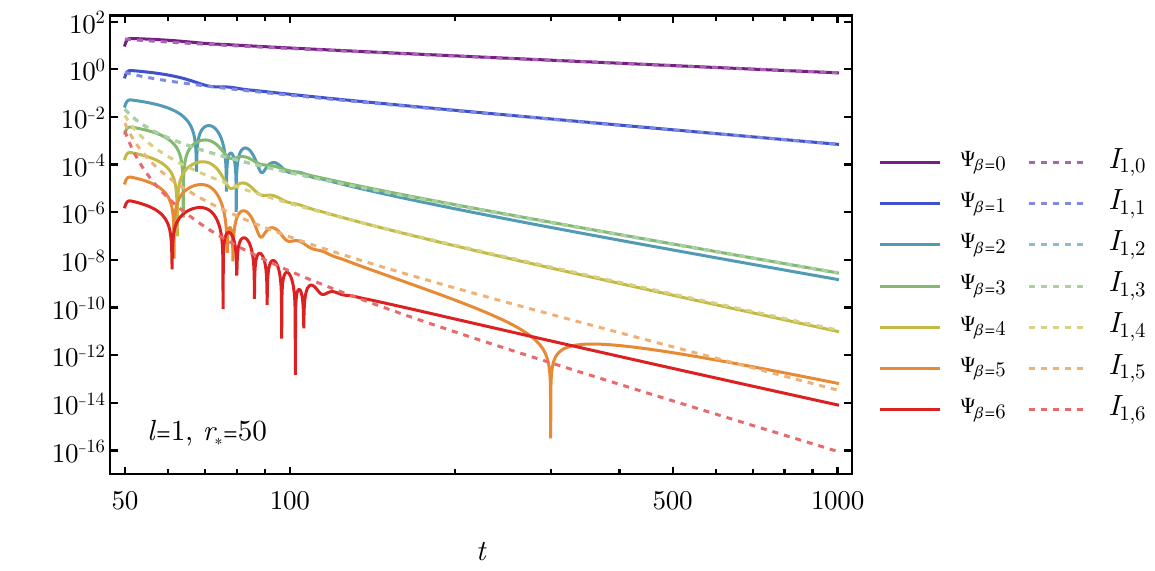}
  \caption{Evolution of $|\Psi(t,r_*)|$ for $l=1$ and $0 \leq \beta \leq 6$.
    The solid curves are from numerical simulation.
    The dashed curves are the predicted late time $|\Psi_F(t,r_*)|$ as listed in \tref{tab:nonlinear_tail_amplitudes}.
    For $\beta \leq 4$, the predicted $|\Psi_F(t,r_*)|$'s and actual $|\Psi(t,r_*)|$ coincide at late times, as expected.
    For $\beta \geq 6$, the Price tail dominates at late times.
    Note that significant early oscillations appear for the higher $\beta$ simulations; they are quasinormal modes excited by the source.}
  \label{fig:sourced_Psi_loglog}
\end{figure}

\Fref{fig:sourced_Psi_powerlaw} gives the evolution of the power law index $\dv*{\ln(|\Psi|)}{\ln(t)}$ for $l=1$.
It is clear that the power laws are converging to values predicted in \sref{sec:Psi_analytical_approximation}.
For $\beta \leq 1$, the power law converges to $p=-l-\beta = -1-\beta$.
For $2 \leq \beta \leq 3$, the power law converges to $p=-2l-2=-4$.
For $\beta = 4,6$, the power law converges to $p=-2l-3=-5$.
For $\beta=5$, the power law had not stabilized at $t=1000$, but one can see a clear trend toward $p=-2l-3=-5$.
A fit (shown in \tref{tab:nonlinear_tail_power_laws}) for the index over time confirms that the index for $\beta=5$ does converge to $-5$ at later times.
\begin{figure}[t]
  \centering \includegraphics[width=0.8\textwidth]{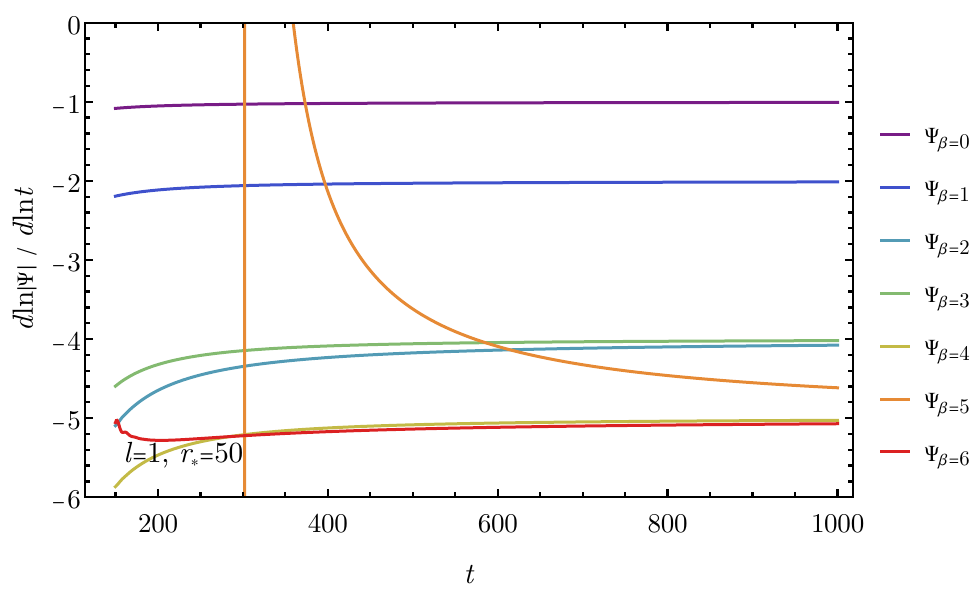}
  \caption{Running of the power law index $\dv*{\ln(|\Psi|)}{\ln(t)}$ for $l=1$ and $0 \leq \beta \leq 6$.
    We get $p=-\beta-l$ for $\beta \leq 1$, $p=-2l-2$ for $\beta \leq l+2$, and $p=-2l-3$ for $\beta \geq l+3$, as expected.
    For $\beta=5$, the power law has not converged to $5$ at the end of simulation $t=1000$, but we expect it to converge at later times.}
  \label{fig:sourced_Psi_powerlaw}
\end{figure}

\Tref{tab:nonlinear_tail_power_laws} gives extrapolated values for the power laws indices.
For some $(l,\beta)$ pairs, the power law had not stabilized at $t=1000$.
We thus extrapolate the power law by fitting the power law index over time $\dv*{\ln(|\Psi|)}{\ln(t)}$ to the template $p(t) = a + b / (t - c)$, and we report the fitted $a$ as the final power law index.
\footnote{This fitting formula is motivated by the fact that when $f = a t^{-n} + ab t^{-n-1}$, $\dv*{\ln(|f|)}{\ln(t)} = -n - b / t + \order{1/t^2}$.}
The fit was performed on index data from range $t \in [700,1000]$, and we have checked that the fits are insensitive to the choice of range.
The power laws largely agree with analytic predictions in \sref{sec:Psi_analytical_approximation}, with the exception of some $2 \leq \beta \leq l+1$ runs exhibiting power law between $t^{-2l-2}$ and $t^{-2l-3}$.
We believe this is due to competition between the $t^{-2l-3}$ Price tail and the $t^{-2l-2}$ nonlinear tail, and will be resolved to pure nonlinear tail dominance in longer simulations.
\begin{table}[t]
  \centering
  \begin{tabular}{|c|c|c|c|c|c|c|c|}
    \hline
    $p$ for $t^{p}$ & $\beta=0$ & $\beta=1$ & $\beta=2$ & $\beta=3$ & $\beta=4$ & $\beta=5$ & $\beta=6$ \\
    \hline
    $l=0$&$0.00$&$-1.00$&$-1.99$&$-2.99$&$-3.03$&$-3.00$&$-2.99$\\$l=1$&$-1.00$&$-2.00$&$-4.00$&$-3.99$&$-4.98$&$-4.99$&$-5.00$\\$l=2$&$-2.00$&$-3.00$&$-6.14$&$-6.99$&$-5.98$&$-6.97$&$-7.00$\\$l=3$&$-3.00$&$-4.00$&$-8.91$&$-8.34$&$-8.01$&$-7.99$&$-8.97$\\$l=4$&$-4.00$&$-5.00$&$-11.00$&$-11.00$&$-10.10$&$-9.99$&$-9.98$ \\
    \hline
  \end{tabular}
  \caption{Late time power laws extrapolated from the sourced simulations.
    For $\beta \leq 1$ and $l + 2 \leq \beta \leq l+3$, we find $t^{-l-\beta}$ power laws, indicating nonlinear tail dominance.
    For $\beta \geq l+4$, we find $t^{-2l-3}$ power laws, indicating Price tail dominance.
    Cases with $2 \leq \beta \leq l+1$ generally exhibit power laws between $t^{-2l-2}$ and $t^{-2l-3}$, signifying competition between the Price tail and the $t^{-2l-2}$ nonlinear tail.
  }
  \label{tab:nonlinear_tail_power_laws}
\end{table}

\begin{table}[t]
  \centering
  \begin{tabular}{|c|c|c|c|c|c|c|c|}
    \hline
    $ \Psi / \Psi_F^{\mathrm{(predicted)}} - 1$ & $\beta=0$ & $\beta=1$ & $\beta=2$ & $\beta=3$ & $\beta=4$ & $\beta=5$ & $\beta=6$ \\
    \hline
    $l=0$ & $0.0015$ & $0.0074$ & $0.0057$ & $-0.12$ & $2.5$ & $180.$ & $8700.$\\
    $l=1$ & $-0.0041$ & $0.00065$ & $-0.47$ & $-0.0014$ & $-0.13$ & $0.92$ & $88.$\\
    $l=2$ & $-0.0091$ & $-0.004$ & $-0.64$ & $-0.96$ & $-0.01$ & $-0.13$ & $0.16$\\
    $l=3$ & $-0.014$ & $-0.0086$ & $0.35$ & $-0.67$ & $-0.57$ & $-0.02$ & $-0.14$\\
    $l=4$ & $-0.019$ & $-0.012$ & $8.2$ & $0.49$ & $-0.46$ & $-0.098$ & $-0.032$ \\
    \hline
  \end{tabular}
  \caption{Fractional error between numerical $\Psi(t)$ and predicted nonlinear tail $\Psi_F^{\mathrm{(predicted)}}(t)$ at $t = 1000$.
    For $\beta \leq 1$ and $\beta = l+2$, the fractional error are all less than $4\%$, indicating an accurate prediction by $\Psi_F^{\mathrm{(predicted)}}$.
    For $\beta = l+3$, the error is $\order{0.1}$, implying that both the nonlinear tail and the Price tail contribute to the $t^{-2l-3}$ tail, with the Price tail being smaller in amplitude.
    For $\beta \geq l+4$, the error range from $0$ to $8700$, indicating dominance by the Price tail.
    Cases with $2 \leq \beta \leq l+1$ generally have $\order{1}$ error, which is likely due to a slow transition from the Price tail to the $t^{-2l-2}$ nonlinear tail.
  }
  \label{tab:nonlinear_tail_amplitude_comparison}
\end{table}

\Tref{tab:nonlinear_tail_amplitude_comparison} provides a comparison between the numerical result and the predicted nonlinear tail at $t=1000$.
We find the fractional error to be $\order{0.01}$ for $\beta \leq 1$ and $\beta = l+2$, and $\order{0.1}$ for $\beta = l+3$, indicating remarkable accuracy for the nonlinear tail predictions.
Note that for $\beta = l+3$, that the fractional error is $\order{0.1}$ suggests the amplitude of the Price tail is $\order{0.1}$ of that of the nonlinear tail.
For $2 \leq \beta \leq l+1$, for which we expect $t^{-2l-2}$ tails, the fractional error is $\order{1}$.
Upon closer inspection, we find that some of the fractional errors for parameters $2 \leq \beta \leq l+1$ have not converged to their asymptotic values at $t=1000$.
Moreover, large fractional errors correlate with the extrapolated power law (see \tref{tab:nonlinear_tail_power_laws}) being closer to $t^{-2l-3}$ rather than the predicted $t^{-2l-2}$.
For example, case $(l,\beta) = (4,2)$ has extrapolated power law index $p=-11=-2l-3$ and fractional error $8.2$, suggesting that $\Psi(t)$ is dominated by the $t^{-2l-3}$ Price tail at $t=1000$.
These numbers are consistent with the picture that there is sensitive competition between the $t^{-2l-3}$ Price tail and the $t^{-2l-2}$ nonlinear tail, and we expect to see only the nonlinear tail in longer simulations.

\section{Dynamical nonlinear tails in a self-interacting scalar field}
\label{sec:self_interacting_scalar_field}
Fields with nonlinear couplings around black holes are a natural setting for the sourced tails discussed in \sref{sec:nonlinear_tail_analytics}.
In this section, we discuss nonlinear tails in scalar perturbations in the case that the scalar field has cubic interactions.\footnote{Although cubic interaction can lead to instabilities, we study only the regime far from the onset of the instability, so our conclusions are also applicable to stable theories with cubic interactions.}
With both analytical perturbative analysis and numerical simulations, we demonstrate that nonlinear tails can form under common initial conditions.
While we studied a scalar field for simplicity, we believe our findings reveal general features in black hole perturbations, including that of gravitational waves.

\subsection{Setup: a scalar field with cubic interaction}
\label{sec:coupled_simulation_setup}

Our setup consists of a real scalar $\Phi$ on a Schwarzschild background with cubic interaction:
\begin{align}
  \label{eq:cubic_scalar_Phi_eqn}
  0 &= -\nabla^a \nabla_a \Phi + \lambda \Phi^2 \nonumber \\
  \dd{s}^2 &= -f \dd{t}^2 + f^{-1} \dd{r}^2 + r^2 \dd{\Omega}^2 \qq{where} f = 1 - \frac{1}{r} \;.
\end{align}
We define $\Psi \equiv r \Phi$, and decompose $\Psi$ in terms of real spherical harmonics (RSH):
\begin{align}
  \Phi(t, r, \hat{r}) &= \frac{1}{r} \sum_{(lm)} \Psi_{lm}(t, r) Y_{lm}(\hat{r}) \;,
\end{align}
where $(lm)$ ranges over the harmonic modes, and $\Psi_{lm}$ and $Y_{lm}$ are real functions.
The mode $\Psi_{lm}$ then satisfies the Regge-Wheeler equation with an additional nonlinear term:
\begin{align}
  \label{eq:cubic_scalar_Psi_lm_eqn}
  & (\partial_t^2 -  \partial_{r_*}^2 + V_l) \Psi_{lm}
    =  \frac{\lambda(1-r)}{r^2} (\Psi^2)_{lm} \nonumber \\
  & V_l(r) =  l (l+1) \frac{r-1}{r^3} + \frac{r-1}{r^4} \;,
\end{align}
where $r_*$ is the tortoise coordinate and $(\Psi^2)_{lm}$ denotes the $(lm)$ component of $\Psi^2$.
More specifically, $(\Psi^2)_{lm}$ is given by:
\begin{align}
  (\Psi^2)_{lm} &\equiv \frac{1}{4\pi} \int Y_{lm} \Psi^2 \dd{\Omega} = \sum_{(l_1 m_1), (l_2 m_2)} C_{lm,l_1 m_1,l_2 m_2} \Psi_{l_1 m_1} \Psi_{l_2 m_2} \nonumber \\
  C_{lm,l_1 m_1,l_2 m_2} &\equiv \frac{1}{4\pi} \int Y_{lm}(\hat{r}) Y_{l_1 m_1}(\hat{r}) Y_{l_2 m_2}(\hat{r}) \dd{\Omega} \;.
\end{align}

The initial condition is an approximately in-going Gaussian packet in the mode $\Psi_{11}$:
\begin{align}
  \label{eq:cubic_scalar_Psi_IC}
  &\Psi_{11}(t=0,r_*) = \frac{1}{\sigma \sqrt{2\pi}} e^{-\frac{(r_*-r_s)^2}{2 \sigma^2 }},\quad
                       \partial_t\Psi_{11}(t=0,r_*) = \partial_{r_*} \Psi_{11}(t=0,r_*) \nonumber \\
  &  \sigma = 0.5,\quad r_s = 50 \nonumber \\
  & \Psi_{lm}(t=0,r_*) = \partial_t \Psi_{lm}(t=0,r_*) = 0 \qq{for} (lm) \neq (11) \;.
\end{align}
With this initial condition, we ran simulations for couplings $\lambda = 0.1, 0.01, 0.001, 0.0001$, up to $t=1200$.

\subsection{Perturbative analysis}
\label{sec:coupled_simulation_analytical}
To understand the effect of the nonlinear term $\lambda \Phi^2$, it is instructive to study the coupled equations \eqref{eq:cubic_scalar_Psi_lm_eqn} in the small $\lambda$ limit.
Given fixed initial conditions, we can write the solution in terms of power series in $\lambda$:
\begin{align}
  & \Psi_{lm}(t, r_*) = \sum_{k=0}^\infty \lambda^k \Psi_{lm}^{(k)}(t, r_*) \;.
\end{align}
If $k$ is the least non-negative integer such that $\Psi_{lm}^{(k)} \neq 0$, we will call $k$ the ``perturbative order'' for $\Psi_{lm}$.
Roughly speaking, a mode $\Psi_{lm}$ of perturbative order $k$ scales as $\lambda^k$ for small $\lambda$.
In particular, if $\Psi_{lm}$ has nonzero initial condition, then its perturbative order is zero, and $\Psi_{lm}^{(0)}$ is simply the solution of the linear Regge-Wheeler equation.

It is customary to solve the system of equations \eqref{eq:cubic_scalar_Psi_lm_eqn} perturbatively.
Expanding the equations order by order in $\lambda$, we have:
\begin{align}
  0 =& \  (-\partial_t^2 +  \partial_{r_*}^2 - V_l) \Psi_{lm}^{(0)} \nonumber \\
        & +  \lambda \left[(-\partial_t^2 +  \partial_{r_*}^2 - V_l) \Psi_{lm}^{(1)} + \frac{(1-r)}{r^2} \left(\Psi^{(0)} \Psi^{(0)}\right)_{lm} \right] \nonumber \\
        & +  \lambda^2 \left[(-\partial_t^2 +  \partial_{r_*}^2 - V_l) \Psi_{lm}^{(2)} + \frac{(1-r)}{r^2} \left(2 \Psi^{(1)} \Psi^{(0)}\right)_{lm} \right] \nonumber \\
        & +  \lambda^3 \left[(-\partial_t^2 +  \partial_{r_*}^2 - V_l) \Psi_{lm}^{(3)} + \frac{(1-r)}{r^2} \left(2 \Psi^{(2)} \Psi^{(0)} + \Psi^{(1)} \Psi^{(1)}\right)_{lm} \right]
          + \hdots \;,
\end{align}
where $\Psi^{(n)} = \sum_{(lm)} \Psi_{lm}^{(n)} Y_{lm}$, and $(\cdot)_{lm}$ denotes the RSH coefficient of mode $(lm)$.
We thus have a sourced Regge-Wheeler equation $(-\partial_t^2 +  \partial_{r_*}^2 - V_l) \Psi_{lm}^{(n)}  = -Q_{lm}^{(n)}$ for each $n$, where the source $Q_{lm}^{(n)}$ is implicitly defined.
It is clear that $Q_{lm}^{(n)}$ contains quadratics of $\Psi_{lm}^{(k)}$, with $k < n$.
Note that the above perturbative scheme readily yields a method for finding the perturbative order of $\Psi_{lm}$: iteratively find $\Psi_{lm}^{(k)}$ for successively higher $k$, and the first $k$ such that $\Psi_{lm}^{(k)} \neq 0$ is the perturbative order.

We now apply the perturbative analysis to initial condition \eqref{eq:cubic_scalar_Psi_IC}.
Since only $\Psi_{11}$ is initially nonzero, it is the only mode of perturbative order zero; namely, $\Psi_{lm}^{(0)} = 0$ for $(lm) \neq (11)$.
Some of the leading order sources $Q_{lm}^{(k)}$ are thus given by
\begin{align}
  Q_{22}^{(1)} = \frac{1-r}{r^2} \left( \frac{1}{2} \sqrt{\frac{3}{5 \pi }} (\Psi^{(0)}_{11})^2 \right),\quad
  Q_{33}^{(2)} = \frac{1-r}{r^2} \left( \frac{3 (\Psi^{(0)}_{11}) (\Psi^{(1)}_{22})}{\sqrt{14 \pi }} \right),\hdots \;.
\end{align}
We provide the full list of leading order sources $Q_{lm}^{(k)}$ for each nonzero harmonic in equation \eqref{eq:Q_lm_leading_order}.
Quadratic terms of form $(\Psi_{11})^2$ source $\Psi_{00}$, $\Psi_{20}$ and $\Psi_{22}$, which are of order $\lambda^1$.
These harmonics in turn source $\Psi_{31}$, $\Psi_{33}$ and induces back-reaction on $\Psi_{11}$ at order $\lambda^2$.
\Tref{tab:Psi_perturbative_orders} lists the harmonics (for $l \leq 4$) that eventually get sourced in the perturbative expansion, along with their perturbative orders.
Harmonics that are not listed do not get sourced at any order, thus they remain exactly zero in the true solution.
For all harmonics except for $(11)$, the perturbative order of $\Psi_{lm}$ is the same as the order of the leading source $Q_{lm}^{(k)}$, since these harmonics have formal solution $\Psi_{lm} = (\partial_t^2 -  \partial_{r_*}^2 + V_l)^{-1} \lambda^k Q_{lm}^{(k)} + \order{\lambda^{k+1}}$.
\begin{table}[h]
  \centering
  \begin{tabular}{|c|c|c|c|c|c|c|c|c|}
    \hline
    $\Psi_{00}$ & $\Psi_{11}$ & $\Psi_{20}$ & $\Psi_{22}$ & $\Psi_{31}$ & $\Psi_{33}$ & $\Psi_{40}$ & $\Psi_{42}$ & $\Psi_{44}$ \\
    \hline
    $\sim\lambda^1$ & $\sim\lambda^0$ & $\sim\lambda^1$ & $\sim\lambda^1$ & $\sim\lambda^2$ & $\sim\lambda^2$ & $\sim\lambda^3$ & $\sim\lambda^3$ & $\sim\lambda^3$ \\
    \hline
  \end{tabular}
  \caption{Perturbative order in $\lambda$ for nonzero $\Psi_{lm}$'s, or the minimum $k$ such that $\Psi_{lm}^{(k)} \neq 0$, for all $l \leq l_{\mathrm{max}} = 4$.
    $\Psi_{lm}$ scales as $\lambda^k$ in the small $\lambda$ limit.
    $\Psi_{11}$ converges to the solution for the linear Regge-Wheeler equation as $\lambda \to 0$.
    Roughly speaking, higher $l$ harmonics are of higher order.
  }
  \label{tab:Psi_perturbative_orders}
\end{table}

Given initial conditions \eqref{eq:cubic_scalar_Psi_IC}, the source terms $Q_{lm}^{(k)}$ \eqref{eq:Q_lm_leading_order} are expected to be approximately in the form \eqref{eq:outgoing_source} with $\beta=1$ at late times.
For example, after the initially ingoing Gaussian wavepacket in $\Psi_{11}$ is reflected at the light ring, it leaves behind a linear quasinormal mode profile, thereby forming a compact outgoing source $\sim (\Psi_{11})^2$.
The QNM profile propagates outwards without decay and is positioned at $r_* \approx t - r_s$ at late times.
Matching \eqref{eq:outgoing_source} with the sources $Q_{lm}^{(k)}$ (e.g. $Q_{00}^{(1)}$) yields an $F$ capturing the quadratic QNM profile (e.g. $F(x) \propto (\Psi_{11}^{(0)}(t,t+r_s-x))^2$), with $t_i-r_i \approx r_s$ and $\beta = 1$.
Harmonics other than $\Psi_{11}$ are also sourced before $t = r_s$ by their (possibly indirect) coupling to $\Psi_{11}$, so we expect similar linear QNM profiles in $\Psi_{lm}$  for $(lm) \neq (11)$.
These QNM profiles in $\Psi_{lm}$ will move along with the QNM profile in $\Psi_{11}$, and they all contribute to $Q_{lm}^{(k)}$ in a similar manner.
That said, as we will see in numerical simulations, there are caveats to the above picture.

As discussed in \sref{sec:nonlinear_tail_analytics}, outgoing compact sources $Q_{lm}^{(k)}$ lead to nonlinear tails that can eventually dominate $\Psi_{lm}$.
Since $\beta = 1$ in our case of interest, the nonlinear tails are expected to exhibit $t^{-l-1}$ power law at late times.
Furthermore, according to the analytic results in \eqref{eq:Psi_F_integral_u_v} and sources \eqref{eq:Q_lm_leading_order}, the amplitudes of the tails depend on the integral
\begin{align}
  \label{eq:approximate_F}
  \int_{-\sigma}^\sigma F(x) \dd{x} \sim \lambda^{k_1 + k_2 + 1} \int_{-\sigma}^\sigma \Psi_{l_1 m_1}^{(k_1)}(t,t+r_s-x) \Psi_{l_2 m_2}^{(k_2)}(t,t+r_s-x) \dd{x}
\end{align}
where $\sigma$ is the rough size of the QNM profile, and $k_1$, $k_2$ are the order of harmonics appearing in the source terms.
The nonlinear tails thus have a predictable amplitude scaling behavior in the small $\lambda$ limit: with the exception of $\Psi_{11}$, all $\Psi_{lm}$'s are expected to have nonlinear tails whose amplitudes scale as given in \tref{tab:Psi_perturbative_orders}.

\subsection{Numerical results}
\label{sec:coupled_simulation_numerical}
We first describe the simulations we ran.
By definition, $(\Psi^2)_{lm}$ contains an infinite sum over quadratic harmonic modes.
We set a cutoff $l_{\mathrm{max}} = 4$ and include all harmonics with $l \leq l_{\mathrm{max}}$ in our simulations.\footnote{To justify the choice to simulate with a cutoff in $l$, note that our perturbative analysis indicates that higher $l$ harmonics typically have stronger $\lambda$ scaling, suggesting that nonlinearities do not lead to large amplitudes in these harmonics. We thus expect harmonics with the lowest $l$'s to capture the most salient features of a full field simulation. In the limit $l_{\mathrm{max}} \to \infty$, we recover a full field simulation using spectral method with RSH basis.}
Each simulation thus consists of $(l_{\mathrm{max}}+1)^2 = 25$ coupled 1+1D PDEs (including harmonics that never get sourced), forming a closed system of equations.
There are $623$ nonzero coupling terms of the form $(\Psi^2)_{lm} \supset \Psi_{l_1 m_1} \Psi_{l_2 m_2}$ for $l_{\mathrm{max}}=4$; all these couplings terms were included.
The spatial resolution was $h = 0.03$ over domain $r_* \in [-600,1200]$, and 2nd order spatial derivatives were approximated using a 4th order finite difference scheme.
Time steps were fixed to $\Delta{t} = 0.01$, with time evolution performed using an 8th order Runge-Kutta method.
Double precision floating point numbers were used for these simulations.
With these settings, we evolved the equations for couplings $\lambda = 0.1, 0.01, 0.001, 0.0001$, from $t=0$ to $t=1200$.
\Revise{We also validated our numerical routine by repeating the $\lambda = 0.001$ simulation with finer discretization ($h=0.015$, $\Delta{t} = 0.005$) or with higher cutoff $l_{\mathrm{max}} = 5$.}
Our code is released at \url{https://github.com/hypermania/BlackholePerturbations}.

\Fref{fig:coupled_Psi_waveform_plot} lists waveforms extracted at $r_*=r_s=50$.
As expected, we observe linear QNM profiles for each $\Psi_{lm}$ after $t = r_* + r_s = 100$.
We extracted the oscillation periods from the waveforms by identifying adjacent peaks, and found that the frequencies are consistent with linear quasinormal frequencies of respective $l$'s.
Upon normalization with respect to $\lambda$, the plotted waveforms for $\lambda=0.1, 0.01, 0.001, 0.0001$ are visually indistinguishable, indicating the accuracy of the $\lambda$ scaling presented in \tref{tab:Psi_perturbative_orders}.
Note that for $\Psi_{40}$, $\Psi_{42}$ and $\Psi_{44}$, the QNM oscillations are clearly displaced from zero.
This displacement is due to sourcing by the QNM profiles at lower perturbative orders, with the response characterized primarily by the free propagation Green's function $G_F$.
We will discuss the implications of this displacement in detail later.

\begin{figure}[t]
  \centering \includegraphics[width=\textwidth]{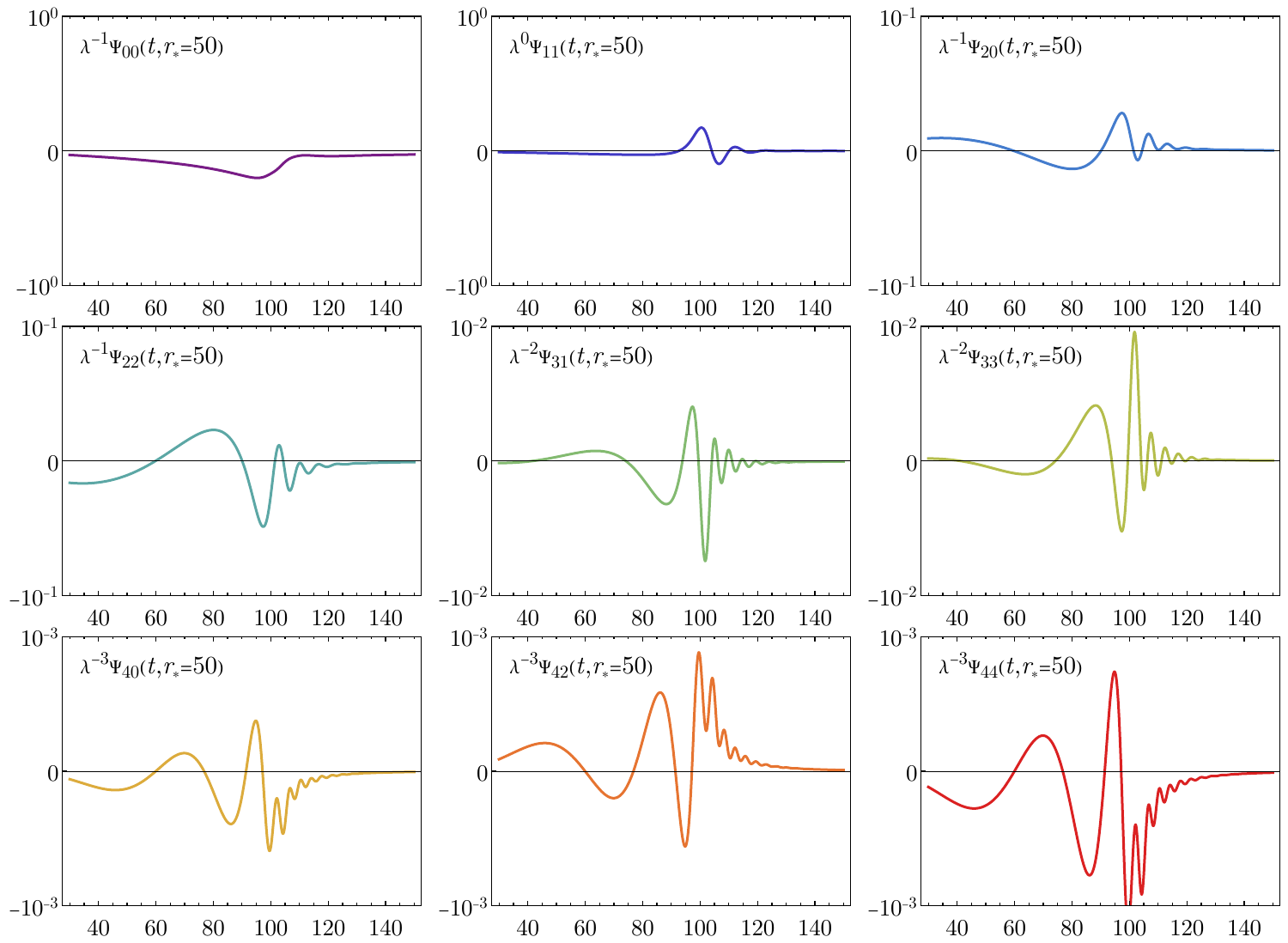}
  \caption{Waveforms for the activated modes.
    The waveform is extracted at $r_*=50$ round $t=100$, when the quasinormal modes from the source first appear.
    These waves are outgoing, and acts as a compact source on other modes due to the cubic nonlinear coupling.
    Although this figure is generated from the simulation with $\lambda = 0.1$, the corresponding (scaled) waveforms for $\lambda = 0.01, 0.001, 0.0001$ are visually indistinguishable from above.
    This shows $\Psi_{lm}$ scales with $\lambda$ as expected, indicating the validity of the perturbative analysis.}
  \label{fig:coupled_Psi_waveform_plot}
\end{figure}

\Fref{fig:coupled_Psi_evolution_plot} illustrate the late time evolution of $\Psi_{lm}$ for $\lambda = 0.001$, $\lambda = 0.01$ and $\lambda = 0.1$.
We observe clear indications of nonlinear power law tails.
In both simulations, $\Psi_{00}$, $\Psi_{20}$, $\Psi_{22}$, $\Psi_{31}$, $\Psi_{33}$ all roughly follow $t^{-l-1}$ power laws at late times, consistent with analytical predictions on the nonlinear power law tails.
(The $l=4$ harmonics do not follow $t^{-l-1}$ power, which will be discussed later this subsection.)
A closer inspection on the instantaneous power laws $\dv*{\ln(|\Psi_{lm}|)}{\ln(t)}$ shows that they are closer to $-l-1$ in the $\lambda = 0.001$ simulation than for the $\lambda = 0.1$ coupling, signifying distortion of the power laws due to nonlinearities in the $\lambda = 0.1$ case.
Within the same simulation, tails with the same angular number $l$ differ only by an $\order{1}$ factor, which stems from the differences between the RSH couplings \eqref{eq:Q_lm_leading_order}.
A comparison between amplitudes of the tails in the $\lambda = 0.001$ and $\lambda = 0.1$ simulations shows that the nonlinear tails also follow the expected $\lambda$ scaling.
\footnote{Similar scaling was discussed in \cite{Zlochower:2003yh}.}

We also perform a consistency check on the amplitudes of the nonlinear tails.
According to equation \eqref{eq:Psi_F_integral_u_v} and our perturbative analysis, at late times, the predicted amplitude of the nonlinear tail is given by
\begin{align}
  \Psi_{lm}^{(\mathrm{predicted})}(t) \approx I_{l,1}(t,r_*) \times \int_{-\sigma}^\sigma F_{lm}(x) \dd{x} \;,
\end{align}
where $F_{lm}$ is inferred from the source \eqref{eq:Q_lm_leading_order} and has generic form \eqref{eq:approximate_F}.
Because we don't have accurate analytical predictions for the QNM profiles around the wavefront, we chose to approximate $F_{lm}$ by directly extracting the QNM profiles from our simulation results.
For example, for $(lm)=(22)$, we have
\begin{align}
  F_{22}(x)
  \approx \lambda \left(-\frac{1}{2}\sqrt{\frac{3}{5\pi}} \right) \left( \Psi_{11}(t_0,t_0+r_s-x) \right)^2 \;,
\end{align}
where $t_0$ is some time chosen after the QNM profile has stabilized, and we have approximated $\Psi_{11}^{(0)} \approx \Psi_{11}$.
Other $F_{lm}$'s can be found via a similar prescription.
Using the numerical result from the $\lambda=0.001$ run, we evaluate the integral on $F_{lm}(x)$ at $t_0=100$ over the range $-50 < x < 50$ ($\sigma = 50$).
The predicted amplitudes $\Psi_{lm}^{(\mathrm{predicted})}$ at $t=1200$ are then compared with the numerical results $\Psi_{lm}$:
\begin{align}
  \label{eq:coupled_tail_amplitude_comparison}
  \frac{\Psi_{lm}(t)}{\Psi_{lm}^{(\mathrm{predicted})}(t)} \approx
  \begin{cases}
    1.07 & (lm) = (00) \\
    1.15 & (lm) = (20), (22) \\
    1.39 & (lm) = (31), (33) \\
    24.8 & (lm) = (40), (42), (44)
  \end{cases}
  \;.
\end{align}
We found good accuracy in the predicted amplitudes for $l \leq 3$, with better accuracy in lower $l$'s.
For $l=4$, the actual amplitude is much larger than predicted; the cause of this discrepancy is discussed in more detail later this subsection.

The behavior of mode $\Psi_{11}$ exhibits competition between the nonlinear tail and the Price tail.
One can see from \fref{fig:coupled_Psi_evolution_plot} that, at late times, $\Psi_{11} \sim t^{-2l-3} = t^{-5}$ for $\lambda = 0.001$ and $\Psi_{11} \sim t^{-l-1} = t^{-2}$ for $\lambda = 0.1$.
The reason for this distinction is that the nonlinear tail is weaker in amplitude by a factor of $(0.001 / 0.1)^2$ in the $\lambda = 0.001$ simulation than that for the $\lambda = 0.1$ simulation.
Throughout the $\lambda = 0.001$ simulation, the Price tail ($\sim t^{-5}$) dominates, whereas the nonlinear tail ($\sim t^{-2}$) dominates in the $\lambda = 0.1$ simulation.
More visibly, in the $\lambda = 0.01$ simulation, a clear transition from Price tail to nonlinear tail occurs around $t=550$; the sharp dip at the transition is the result of the two power law tails possessing different signs.
Matching the amplitudes for the two contributions ($t^{-5} \sim \lambda^2 t^{-2}$) tells us that the time $t_{\mathrm{transition}}$ of transition from Price tail to nonlinear tail scales as $t_{\mathrm{transition}} \propto \lambda^{-2/3}$, and could be far beyond the span of simulation for small $\lambda$.

\begin{figure}[t]
  \centering
  \includegraphics[width=0.49\textwidth]{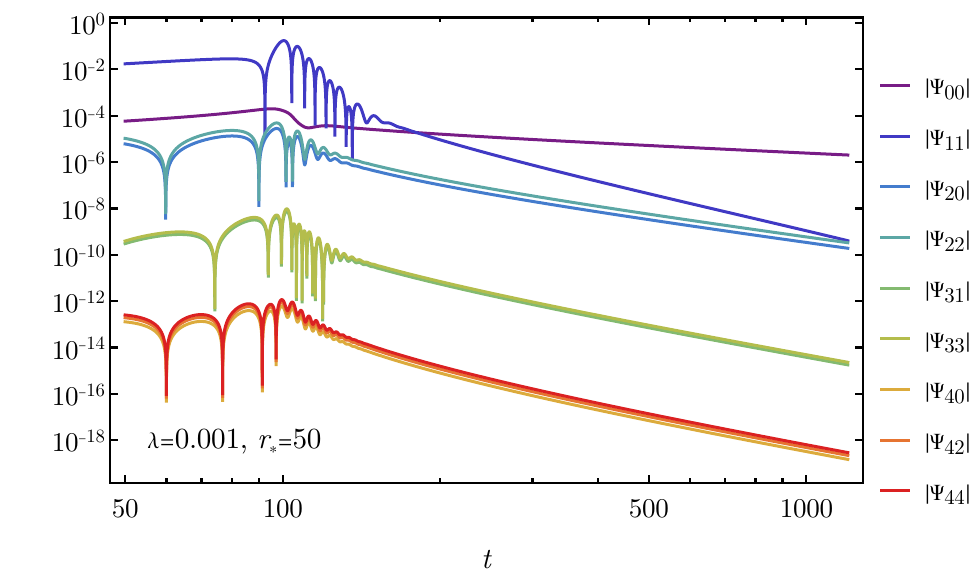}
  \includegraphics[width=0.49\textwidth]{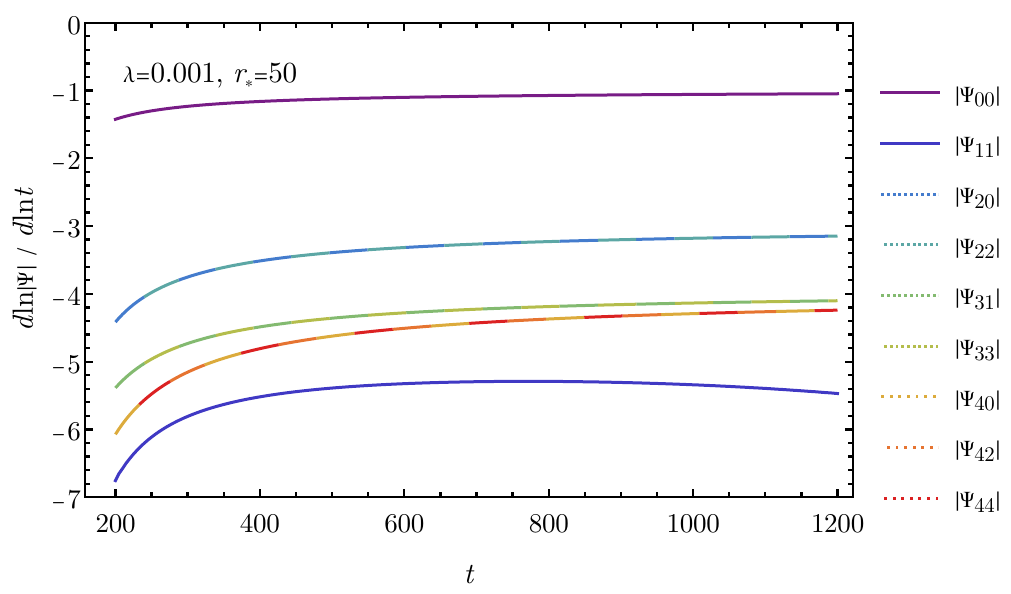}\\
  \includegraphics[width=0.49\textwidth]{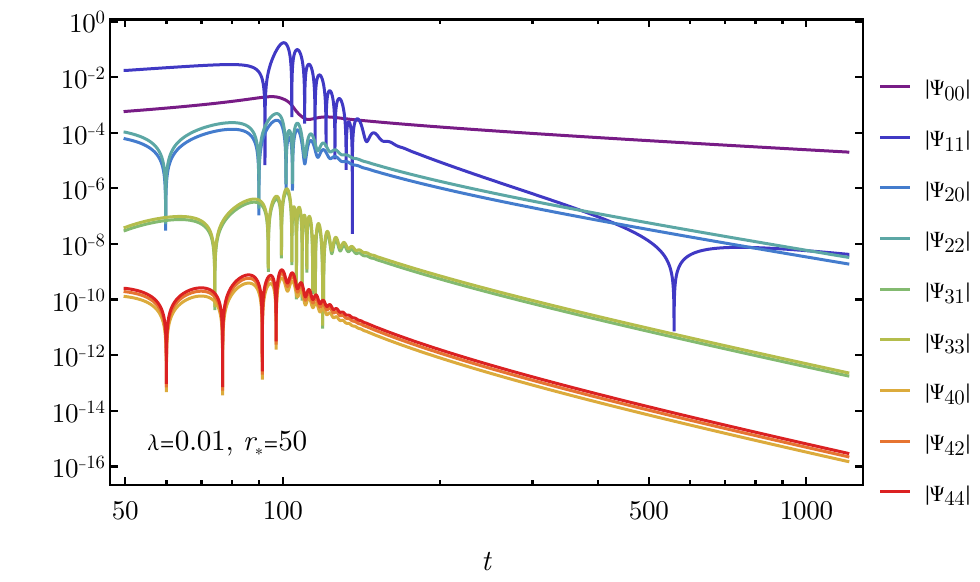}
  \includegraphics[width=0.49\textwidth]{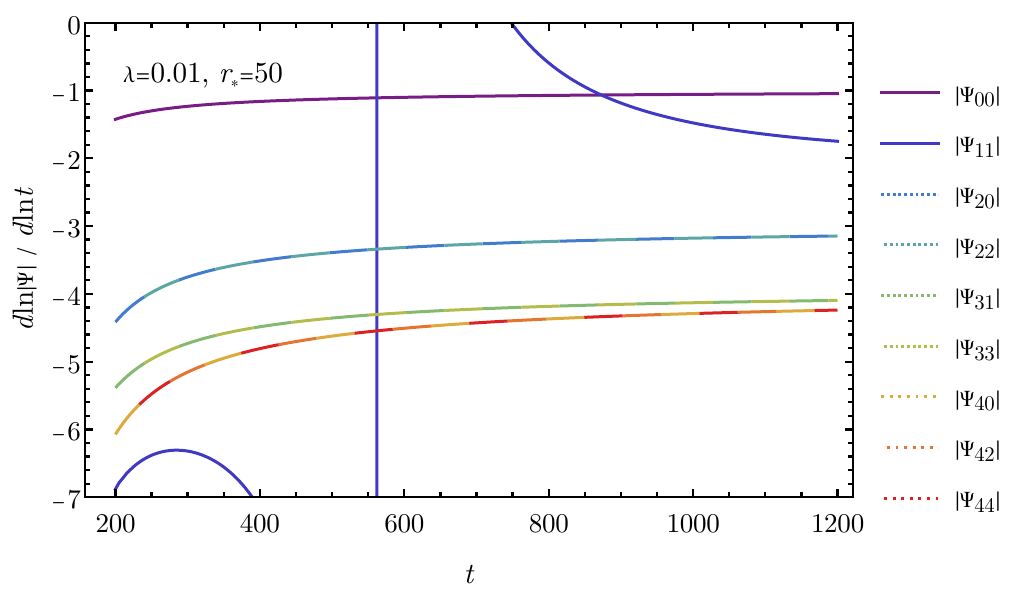}\\
  \includegraphics[width=0.49\textwidth]{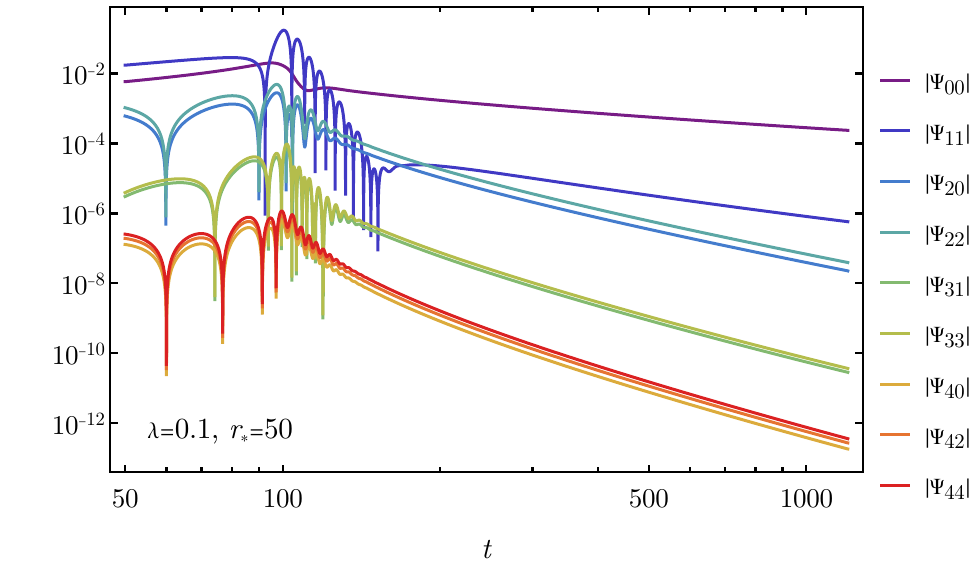}
  \includegraphics[width=0.49\textwidth]{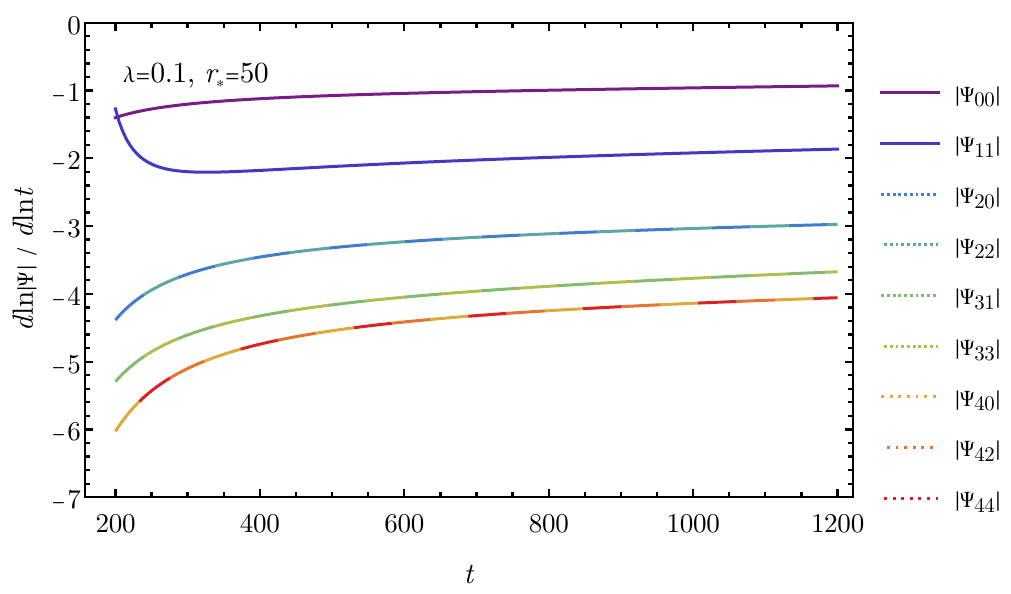}
  \caption{Amplitudes and instantaneous power law indices for the activated modes.
    Most modes are found to follow expected $t^{-l-1}$ nonlinear tail power law.
    As $\lambda$ is varied from $0.001$ to $0.1$, we see expected amplitude scaling in the excited harmonics, and competition between nonlinear tail and Price tail in $\Psi_{11}$.}
  \label{fig:coupled_Psi_evolution_plot}
\end{figure}

The behavior of the $l=4$ harmonics is worth special attention.
By extrapolating the instantaneous power laws $\dv*{\ln(|\Psi_{lm}|)}{\ln(t)}$ in \fref{fig:coupled_Psi_evolution_plot}, we find that $\Psi_{40}$, $\Psi_{42}$ and $\Psi_{44}$ exhibit $t^{-4}$ tails at late times.
This result deviates from our naive expectations that the nonlinear tails satisfy $t^{-l-1} = t^{-5}$ power laws.
This deviation is due to waveform distortions in the lower-$l$ harmonics that source $l=4$ harmonics.
\Fref{fig:coupled_Psi_waveform_evolution_plot} contain snapshots of outgoing profile in $\Psi_{11}$ and $\Psi_{33}$ at different $t$.
One can see that, while the $\Psi_{11}$ waveform remains roughly fixed, the $\Psi_{33}$ waveform clearly has a component (in addition to the QNM profile) that increases in time.
In fact, we find similar additional components in the $l=2$ and $l=4$ waveforms, with the component more noticeable (compared to the QNM profile) for higher $l$.
The source $\Psi_{11} \Psi_{33}$ is thus not merely propagating outward, but also growing in amplitude, effectively changing the $\beta = 1$ index in \eqref{eq:outgoing_source} to $\beta=0$.
\Revise{Note that the $t^{-4}$ power law behavior in the $l=4$ tails, and the distortions in the $\Psi_{33}$ waveform, are all leading order effects that persist in the $\lambda \to 0$ limit.}

The distortions in the outgoing waveforms are themselves due to nonlinear interactions.
For example, the distortion in $\Psi_{33}$ is nonlinearly sourced by the outgoing $\Psi_{11} \Psi_{22}$ profile.
To understand the origin of this distortion, we consider a simplified calculation, wherein the source profile (e.g. $\Psi_{11} \Psi_{22}$) is approximated by a Dirac delta, and $V_l \approx 0$:
\begin{align}
  & (\partial_t^2 -  \partial_{r_*}^2) \Psi =  \delta(t-r_*),\quad
    \Psi(t=0) = \dot{\Psi}(t=0) = 0 \nonumber \\
  & \Psi(t,r_* = t-u) =
    \begin{cases}
      0 & u > 2t \\
      (2t-u) / 4 & 0 < u < 2t \\
      0 & u < 0
    \end{cases}
    \;.
\end{align}
Here, $f(u) = \Psi(t, r_* = t-u)$ is the response waveform extracted at time $t$.
We can see that, the response $\Psi$ is a linear function lagging behind the source, and its amplitude is growing as $t/2$ at $u = 0$ (at the outgoing profile).
This response waveform is roughly consistent with the time-dependent component of the waveform in $\Psi_{33}$ (see \fref{fig:coupled_Psi_waveform_evolution_plot}), which is growing over time and dominating over the QNM profile.
The $t^{-4}$ power law in the $l=4$ tail should thus be understood as a ``higher order'' nonlinear effect: it is the nonlinear effect caused by another nonlinear effect.
To corroborate these claims, we have checked and found similar distortions in the sourced simulations of \sref{sec:nonlinear_tail_analytics}, which capture the contribution from the source (e.g. $\Psi_{11}\Psi_{22}$) at late times.

\begin{figure}[t]
  \centering
  \includegraphics[width=0.44\textwidth]{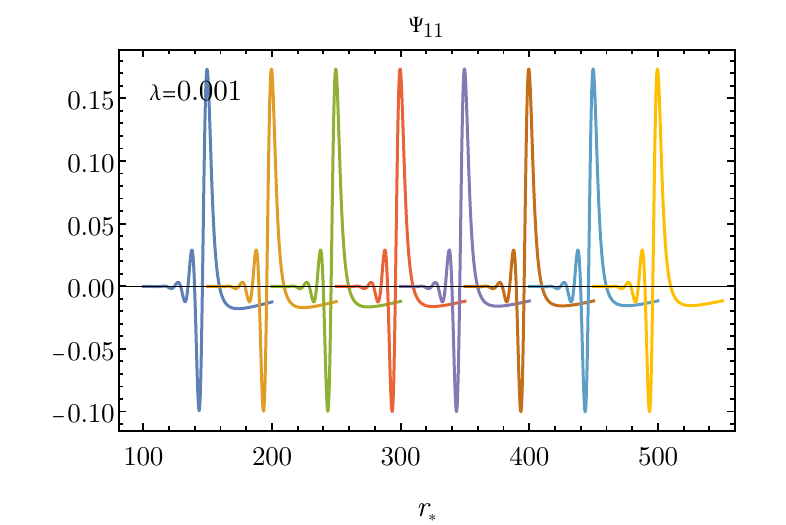}
  \includegraphics[width=0.55\textwidth]{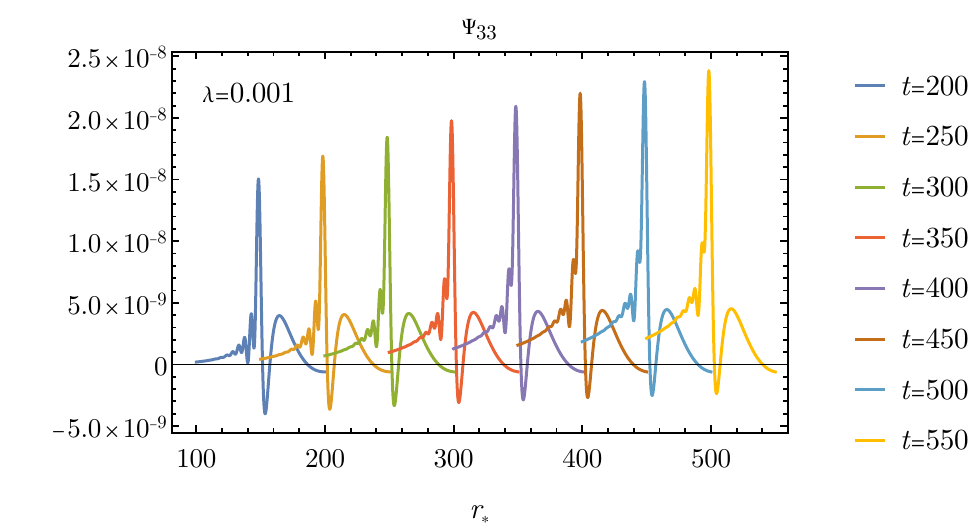}
  \caption{Snapshots of the waveforms in $\Psi_{11}$ and $\Psi_{33}$ as they propagate to spatial infinity.
    The $\Psi_{11}$ waveform is largely fixed over time, but significant distortion accumulate in the $\Psi_{33}$ waveform due to nonlinear effects.
    Such distortions are sourced by couplings of the form $\sim \Psi_{11} \Psi_{22}$.
    This effect is also confirmed by the sourced simulations in \sref{sec:sourced_tail_numerical}.
  }
  \label{fig:coupled_Psi_waveform_evolution_plot}
\end{figure}

\section{Discussion and conclusion}
\label{sec:conclusion}

Using both analytical and numerical methods, we systematically analyzed the scalar nonlinear tails sourced by compact outgoing profiles.
We have shown analytically that the free propagation component \( G_F \) \eqref{eq:GF_expressions} of the Green's function, derived through an approximation scheme accounting for causal domains, determines the nonlinear tail structure.
This clarifies the conceptual origin of the nonlinear tails to be distinct from the branch cut Green's function $G_B$.
Furthermore, we analytically derived both the power law and the amplitude of the nonlinear tails due to outgoing sources decaying as $r^{-\beta}$, and found good matches with numerical results for $0 \leq l \leq 4$ and $0 \leq \beta \leq 6$.
These analytic results, summarized in eq.~\eqref{eq:Psi_F_integral_u_v} and \tref{tab:nonlinear_tail_amplitudes}, refine prior analyses that revealed the power law but not the amplitude of the nonlinear tails \cite{Cardoso:2024jme}.
That the predicted amplitudes match the numerical ones to percent level validates our analytic understanding of nonlinear tails independently from the power law.
In addendum, we found a $t^{-2l-2}$ power law for $2 \leq \beta \leq l+1$ due to higher-order terms in the Jacobian of tortoise coordinate transformations (eq.~\eqref{eq:Psi_F_determinant_factor}), a phenomenon previously discovered in numerical simulations but not fully understood \cite{Cardoso:2024jme}.

We performed a perturbative analysis on scalar perturbations around a Schwarzschild spacetime, with spherical harmonic mode couplings induced from cubic self interaction (eq.~\eqref{eq:cubic_scalar_Psi_lm_eqn}).
For an initial condition wherein only the $(11)$ harmonic is nonzero, the self-interacting scalar develops cascading mode excitations, with each harmonic's amplitude scaling as some power in the coupling $\lambda$.
Typically, higher multipoles scale more intensely in $\lambda$, suggesting that the nonlinear effects are predominantly captured by the lowest multipoles.
Furthermore, utilizing the sourced tail analyses in \sref{sec:Psi_analytical_approximation}, we argue analytically that the outgoing linear quasinormal profiles are expected to source nonlinear tails with $t^{-l-1}$ power laws.
In particular, we provide a link (eq.~\eqref{eq:approximate_F}) between the quasinormal profile and the amplitude of corresponding nonlinear tails.

Numerical simulations of a cubic scalar field model confirm the analytical predictions while revealing additional nonlinear phenomena.
While lower multipoles ($l \leq 3$) exhibit clean $t^{-l-1}$ tails consistent with first-order nonlinear effects, higher multipoles display modified power laws due to secondary nonlinear processes (\fref{fig:coupled_Psi_evolution_plot}).
Specifically, $l=4$ modes develop $t^{-4}$ tails rather than the predicted $t^{-5}$, arising from waveform distortions in their source terms (\fref{fig:coupled_Psi_waveform_evolution_plot}).
In addition, tail amplitudes for the lower multipoles ($l \leq 3$) are consistent with the extracted outgoing quasinormal profiles, whereas tail amplitudes for $l=4$ exhibit large discrepancy from the predicted values (eq.~\eqref{eq:coupled_tail_amplitude_comparison}).
The discrepancies are again due to nonlinear effects, wherein quasinormal waveforms are distorted by nonlinear sourcing from lower multipoles, leading to altered source profile for the nonlinear tails.
We emphasize that both the nonlinear tail and the distortion in its source terms originate from the free propagation Green's function $G_F$, highlighting its importance in understanding nonlinear effects in black hole perturbations.

Our analytical and numerical analyses indicate that, in the presence of cubic scalar interactions, nonlinear tails eventually dominate each harmonic.
Due to the cubic coupling, nonlinear tails are sourced in each harmonic, including those from lower multipoles or via back-reaction.
These nonlinear tails exhibit $t^{-l-1}$ power laws and dominate over the $t^{-2l-3}$ Price tails as well as the exponentially decaying quasinormal modes at late times.
In particular, we observe clear transition from the linear Price tail to the nonlinear tail in the $(11)$ harmonic (\fref{fig:coupled_Psi_evolution_plot}).
Since the amplitude of the nonlinear tail (but not the linear Price tail) scales with coupling $\lambda$, we derived a $t_{\mathrm{transition}} \propto \lambda^{-2/3}$ scaling for the time at which the $(11)$ harmonic transitions to nonlinear tail domination.
For coupling $\lambda = 0.1$, the nonlinear tails dominates at $t_{\mathrm{transition}} \lesssim 200$, whereas for $\lambda = 0.001$ the domination happens for $t_{\mathrm{transition}} > 1000$.

The cubic interaction model serves as a prototype for more complex nonlinear effects in full general relativity, where the gravitational waves (GW) from black hole merger remnants may yield similar tail signatures.
Specifically, the outgoing quasinormal profiles from remnant black holes satisfy Regge-Wheeler or Zerilli equations, which admit the $V \approx l(l+1)/r^2$ approximation in this work and are thus amenable to our analytic analyses.
\footnote{Also see \aref{sec:darboux_transform} for the isospectral connection between these two equations.}
Assuming that the GW source terms decay as $r^{-2}$ and applying the results in \sref{sec:Psi_analytical_approximation}, we expect to see $t^{-2l-2}$ power law nonlinear tails that dominate the GW waveform at late times.
In fact, a $t^{-2l-2}$ tail ($l=4$ in their case) is already seen in a second order simulation of metric perturbations in ref.~\cite{Cardoso:2024jme}.
Evidence for such tails are also recently seen in 3+1 dimensional numerical relativity (NR) simulations of merging black holes \cite{DeAmicis:2024eoy,Ma:2024hzq}, but whether these tails are of the nonlinear nature discussed in this work remains to be answered.

Precision predictions of nonlinear tail amplitudes in GWs can pave way for novel probes of gravity.
One direction is to explore the consistency between the amplitude of the nonlinear tail and its source.
In eq.~\eqref{eq:coupled_tail_amplitude_comparison}, we performed consistency checks between the quasinormal mode profiles and the amplitudes of the corresponding sourced tails.
With upcoming GW detectors, it may become feasible to extract both the quasinormal modes and the nonlinear tails from GW signals of merger events, opening up the possibility for a similar test.
Analogous to consistency relationships between linear and quadratic quasinormal modes~\cite{Khera:2023oyf,Khera:2024yrk,Cheung:2022rbm,Mitman:2022qdl,Ma:2024qcv,Perrone:2023jzq}, the consistency between quasinormal modes and nonlinear tails can be a test of General Relativity.
\footnote{This consistency check can readily be done on NR simulations (e.g. ref.~\cite{DeAmicis:2024eoy,Ma:2024hzq}) to understand the nature of the observed power law tails.}
Another possible avenue is to search for a transition from Price tail to nonlinear tail in GW waveforms.
For nonlinear tails in GW, linear Price tail and nonlinear tail from back-reaction become comparable when $t^{-2l-3} \sim \lambda^2 t^{-2l-2}$, namely $t_{\mathrm{transition}} \propto \lambda^{-2}$.
The onset timing of the transition is thus a sensitive probe of nonlinear couplings in metric perturbations, and can potentially be used to constrain theories of modified gravity~\cite{Berti:2015itd}.
Even if the nonlinear tails are generically too strong for a transition to occur (as in our $\lambda = 0.1$ simulation), we might still find transitions in scenarios where the Price tail is enhanced compared to the nonlinear tails, such as extreme mass ratio inspirals~\cite{Islam:2024vro,DeAmicis:2024not}.

Our findings suggest multiple directions for further investigation.
First, the observed waveform distortions in higher multipoles warrant systematic study of nonlinear propagation effects.
While our analysis focus on nonlinear tails, one can see from \fref{fig:coupled_Psi_waveform_evolution_plot} that nonlinear effects can be important for waveform modeling of higher multipole quasinormal modes.
Second, extending our perturbative framework to gravitational perturbations, with interactions derived from Einstein Hilbert action, can reveal how nonlinear tails manifest in gravitational wave waveforms \cite{Brizuela:2009qd}.
Finally, to understand nonlinear tail phenomena in broader astrophysical context, it is imperative to generalize our analysis to Kerr spacetimes.

Nonlinear tails represent a fundamental feature of black hole perturbation theory, bridging linear quasinormal mode descriptions and fully nonlinear dynamics.
Our results demonstrate that these tails are inherent properties of wave propagation in curved spacetime.
As gravitational wave detectors approach design sensitivity \cite{LIGOScientific:2018mvr,LIGOScientific:2020ibl,KAGRA:2021vkt}, understanding such nonlinear effects will be crucial for interpreting late-time signals and probing the nonlinear regime of strong-field gravity.

\acknowledgments

We thank Valerio De Luca, Lam Hui, Justin Khoury and Huan Yang for useful discussion. This project is supported by APRC-CityU New Research Initiatives/Infrastructure Support from Central.

\appendix
\section{Approximations of Green's function}
\label{sec:review_of_green_function}

In this section, we review and extend approximations to the Green's function of the Regge-Wheeler equation \eqref{eq:green_function_definition}.
In \aref{sec:causality_arguments} and \aref{sec:GF_approximation}, we derive approximation \eqref{eq:GF_expressions} of the free propagation Green's function $G_F$.
In \aref{sec:GQ_and_GB}, we review literature works on quasinormal mode ($G_Q$) and branch cut ($G_B$) contributions to the Green's function.
To contextualize these approximations, we summarize the dominant contributions across spacetime regions:
\begin{align}
  G(t,r_*;t',r_*') \approx
  \begin{cases}
    0 & t-t' < \abs{r_* - r_*'} \\
    G_F & \abs{r_* - r_*'} < t-t' < \abs{r_*} + \abs{r_*'} \\
    G_Q + G_B & t-t' \gg \abs{r_*} + \abs{r_*'}
  \end{cases}
  \;.
\end{align}

\subsection{Green's function's causal dependence on the potential}
\label{sec:causality_arguments}
As discussed in \sref{sec:G_analytical_approximation}, it is intuitive that the value of potential $V(r_*)$ far away from $r_*$ and $r_*'$ cannot affect the Green's function $G(t,r_*;t',r_*')$.
In this section, we formalize this intuition and prove that $G(t,r_*;t',r_*')$ only depends on $V(r_*)$ restricted to $[d_1,d_2]$, where $d_1 \equiv \frac{r_*+r_*' - (t-t')}{2}$ and $d_2 \equiv \frac{r_*+r_*' + (t-t')}{2}$.
See \fref{fig:green_function_domain_demo} for an illustration of the relevant domains.
This argument adds to existing efforts on understanding the Green's function by exploiting causal structure~\cite{Hui:2019aox,Lagos:2022otp}.

\begin{theorem}
  Fix $r_*, r_*'$ and $t-t' \geq 0$.
  $G(t,r_*;t',r_*')$ depends only on the potential $V$ restricted to domain $D \equiv [d_1, d_2]$, where $d_1 \equiv \frac{r_*+r_*' - (t-t')}{2}$ and $d_2 \equiv \frac{r_*+r_*' + (t-t')}{2}$.
  In other words, let $V'$ be a potential such that $\eval{V'}_D = \eval{V}_D$, and let $G'(t,r_*;t',r_*')$ be the Green's function satisfying $(\partial_t^2 - \partial_{r_*}^2 + V'(r_*)) G'(t,r_*;t',r_*') =  \delta(t-t') \delta(r_*-r_*')$, then $G(t,r_*;t',r_*') = G'(t,r_*;t',r_*')$ for the given $t,r_*,t',r_*'$.
\end{theorem}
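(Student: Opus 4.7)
The strategy is a Duhamel-type comparison between $G$ and $G'$, combined with finite propagation speed, to localize the dependence on the potential to the diamond illustrated in \fref{fig:green_function_domain_demo}. I would begin by setting $\Delta V := V - V'$, so that $\Delta V \equiv 0$ on $D = [d_1, d_2]$ by hypothesis. Subtracting the defining equations for $G$ and $G'$ and rearranging yields
\begin{equation*}
(\partial_t^2 - \partial_{r_*}^2 + V)(G' - G)(t, r_*; t', r_*') = \Delta V(r_*)\, G'(t, r_*; t', r_*'),
\end{equation*}
while $G' - G$ and its time derivative vanish for $t \leq t'$. Treating the right-hand side as a source and applying Duhamel's principle with $G$ as the fundamental solution gives the integral identity
\begin{equation*}
(G' - G)(t, r_*; t', r_*') = \int_{t'}^{t}\!\int G(t, r_*; s, y)\, \Delta V(y)\, G'(s, y; t', r_*')\, \dd{y}\, \dd{s}.
\end{equation*}

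Next I would invoke finite propagation speed for the Regge-Wheeler operator, which forces $G(t, r_*; s, y)$ to be supported on $|y - r_*| \leq t - s$ and $G'(s, y; t', r_*')$ on $|y - r_*'| \leq s - t'$. The $(s, y)$-integration in the identity above is therefore confined to the intersection of the backward light cone of $(t, r_*)$ with the forward light cone of $(t', r_*')$. A short geometric computation, minimizing $\max(r_* - (t-s),\, r_*' - (s-t'))$ and maximizing $\min(r_* + (t-s),\, r_*' + (s-t'))$ over $s \in [t', t]$, shows that the spatial projection of this diamond onto the $y$-axis is exactly $[d_1, d_2] = D$. Since $\Delta V$ vanishes on $D$, the integrand is zero throughout its effective support, so $(G' - G)(t, r_*; t', r_*') = 0$, which is the claim.

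The principal obstacle is not the algebra, which collapses in a single line once the ingredients are assembled, but the rigorous justification of the wave-theoretic inputs. Finite propagation speed for $\partial_t^2 - \partial_{r_*}^2 + V$ with bounded potential is standard and follows from an energy estimate on tilted characteristic cones. The Duhamel identity is more delicate because the source $\Delta V \cdot G'$ is itself a distribution with singular support on the forward light cone of $(t', r_*')$; the cleanest route is to smooth the original delta source for $G'$, apply Duhamel in the classical sense to the resulting smooth problem, and pass to the limit, using the causal support properties uniformly along the way. Once these technical points are settled, the causal localization argument closes the proof as presented.
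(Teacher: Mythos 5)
Your proposal is correct and follows essentially the same route as the paper: both reduce the claim to the observation that $G'-G$ solves a wave equation with zero initial data sourced by $(V-V')$ times a Green's function, and both conclude by causal localization, with the spatial projection of the relevant causal diamond being exactly $[d_1,d_2]$. The only cosmetic differences are that you keep $V$ (rather than $V'$) in the comparison operator and make the causality step explicit via a Duhamel integral, whereas the paper phrases it as the domain of dependence of $(t,r_*)$ missing the support of the source.
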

\begin{proof}
  Let $\Delta(t,r_*;t',r_*') \equiv G'(t,r_*;t',r_*') - G(t,r_*;t',r_*')$, then:
  \begin{align}
   (\partial_t^2 - \partial_{r_*}^2 + V'(r_*)) \Delta(t,r_*;t',r_*') = (V(r_*)  - V'(r_*)) G(t,r_*;t',r_*') \;.
  \end{align}
  We have $\Delta = 0$ if the source $ (V(r_*)  - V'(r_*)) G(t,r_*;t',r_*')$ is zero on the domain of dependence of $(t,r_*)$.
  Let $I = \{(\tau,\rho) | t' \leq \tau \qq{and} \abs{r_*' - \rho} \leq \tau - t'\}$ be the domain of influence starting from $(t',r_*')$, then the source is zero for $(t,r_*) \notin I$.
  Also, the source is zero on $(t,r_*) \in \mathbb{R} \times D$ since $\eval{V'}_D = \eval{V}_D$.
  Thus $\Delta(\tau,\rho;t',r_*') = 0$ if the domain of dependence at $(\tau,\rho)$ have null intersection with the region
  \begin{align}
    R = \mathbb{R}^2 - ((\mathbb{R}^2 - I) \cup (\mathbb{R} \times D))
    = I \cap (\mathbb{R}^2 - (\mathbb{R} \times D))
    = I \cap (\mathbb{R} \times (\mathbb{R} - D)) \;.
  \end{align}
  Recall that the domain of dependence at $(\tau,\rho)$ is $S = \{(\tau',\rho') | \tau' \leq \tau \qq{and} \abs{\rho' - \rho} \leq \tau - \tau'\}$.
  For $S \cap R = \emptyset$ to be satisfied, a sufficient condition is  $\tau + \rho \leq  (d_2 - r_* + t') + d_2 = t + r_*$ and $\tau - \rho \leq (-d_1 + r_*' + t')- d_1 = t - r_*$.
  In particular, $\tau = t$ and $\rho = r_*$ satisfies this condition, so $G(t,r_*;t',r_*') = G'(t,r_*;t',r_*')$.
\end{proof}

\subsection{Approximation of $G_F$}
\label{sec:GF_approximation}
In this section, we derive approximation \eqref{eq:GF_expressions} and clarify its regime of validity.

Following the arguments in \aref{sec:causality_arguments}, we consider the following Green's function $G'$ that is approximately $G$ far from the horizon ($r_*, r_*' \gg 1$):
\begin{align}
  \left( \partial_t^2 - \partial_r^2 + \frac{l(l+1)}{r^2} \right) G'(t,r_*;t',r_*') =  \delta(t-t') \delta(r-r')  \;,
\end{align}
where we have used $V(r_*) \approx l(l+1)/r^2$, $\partial_{r_*}^2 \approx \partial_r^2$ and $\delta(r_*-r_*') \approx \delta(r-r')$.
To derive an expression for $G'$, we employ frequency (Laplace) domain method, and write down an explicit form of corresponding the frequency space Green's function $g'(r_*,r_*',s)$~\cite{Leaver:1986gd}:
\begin{align}\label{eq:G_prime_frequency_domain}
  &G'(t,r_*;t',r_*') = \frac{1}{2\pi i} \int_{c - i\infty}^{c + i\infty} g'(r_*, r_*',s) e^{s \Delta t} \dd{s},\quad  \Delta t \equiv t - t' \nonumber \\
  &g'(r_*, r'_*, s) = \frac{\psi_{\infty_-}(r_{*<}, s) \psi_{\infty_+}(r_{*>}, s)}{W(s)}, \quad    r_{*>} \equiv \max(r_*, r_*'), \quad r_{*<} \equiv \min(r_*, r'_*) \;.
\end{align}
Here, \( W(s) \equiv \psi_{\infty_+} \psi_{\infty_-,r} - \psi_{\infty_-} \psi_{\infty_+,r}\) is the Wronskian constructed from $\psi_{\infty_-}$ and $\psi_{\infty_+}$, the two WKB solutions of equation
\begin{equation}
  \partial_{r}^2 \psi - \left(s^2 + \frac{l(l+1)}{r^2}\right) \psi = 0 \;.
\end{equation}
It is easy to verify that this equation has solutions in terms Hankel functions, with the following form:
\footnote{It is possible to obtain these Hankel function solutions by approximating from exact series solutions of the frequency domain Regge-Wheeler equation. 
  Ref.~\cite{Leaver:1986gd} by Leaver provides these solutions in eq.~(18):
  \begin{align}
    \psi_{\infty_\pm}(r, s) &= (2is)^{\pm s} e^{\pm i \phi_\pm} (1 - 1/r)^s \sum_{L=-\infty}^{\infty} b_L [G_{L+\nu}(\eta,z) \pm i F_{L+\nu}(\eta,z)] \;,
  \end{align}
  where $\eta = -is$ and $z = i s r$.
  Since we are interested in the $s = 0$ pole contribution, we take the $s\to 0$ limit, which yields $\eta \to 0$, $(2is)^s \to 1$, $(1-1/r)^s \to 1$, $b_L \to 0$ for $L \neq 0$, and $\nu \to l$~\cite{Leaver:1986gd,Leaver:1986vnb}.
  These relations imply:
  \begin{align}
    \psi_{\infty_\pm}(r, s) \sim F_{l}(0,z) \mp i G_{l}(0,z) = \sqrt{\frac{\pi}{2} z} H_{l+1/2}^{(1,2)}(z)
    \qq{(up to a phase)} \;.
  \end{align}
}
\begin{align}
  \label{eq:psi_pm_via_Hankel}
  \psi_{\infty_+}(r_*, s)
  &= \sqrt{\frac{\pi}{2} z} H_\nu^{(1)}(z)
  = p_+(z^{-1}) e^{-sr} \nonumber \\
  \psi_{\infty_-}(r_*, s)
  &= \sqrt{\frac{\pi}{2} z} H_\nu^{(2)}(z)
  =  p_-(z^{-1}) e^{sr} \;,
\end{align}
where $\nu = l+1/2$, $z=isr$, and $p_\pm$ are (finite order) polynomials with $\abs{p_\pm(0)} = 1$.
\footnote{To see that Hankel functions with half integer indices can be written in the above form, see eq.~(10.1.8) and (10.1.9) of ref.~\cite{abramowitz+stegun}.}
One can see that $\psi_{\infty_+}$ and $\psi_{\infty_-}$ are exactly the outgoing and ingoing solution at $r_* \to \infty$, in accordance with the notations of Leaver~\cite{Leaver:1986gd}.
Moreover, the $G'$ integral \eqref{eq:G_prime_frequency_domain} can be seen to contain exponential factors like $e^{s(\Delta{t} - \abs{r_* - r_*'})}$, which means its integration contour should be closed to the $\Re[s] < 0$ half plane only if $\Delta{t} > \abs{r_* - r_*'}$, as required by causality.

It is immediate from \eqref{eq:psi_pm_via_Hankel}  that the Wronskian $W(s)=2s$ and that $\psi_{\infty_\pm}(r_*, s)$ is holomorphic on $\mathbb{C}$ except for a pole at $s=0$.
The absence of a branch cut in the solutions \(\psi_{\infty_\pm}\) implies that 
\(g(r_*,r_*',s)\) does not have a branch cut on the $s$-plane either.
Therefore, the only contributing part of the $s$-integral in eq.~(\ref{eq:G_prime_frequency_domain}) will come from the pole at $s=0$:
\begin{align}
  \label{eq:G_prime_contour_integral}
  G'(r_*, t; r_*', t')  &= \frac{1}{2 \pi i} \int_{C} e^{s \Delta{t}} \frac{\psi_{\infty_-}(r_*, s) \psi_{\infty_+}(r_*', s)}{2s} \dd{s} \;,
\end{align}
where $C$ is a circle contour around $s=0$, and we have assumed $r_* < r_*'$ for concreteness. The integrand has a pole at $s=0$. 
To get the residue at $s=0$, we can expand $e^{s(t-t')}$ and $\psi_{\infty_\pm}$ in powers of $s$, and extract the coefficient for $s^{-1}$.
Concretely, we have:\footnote{See eq.~(10.1.2) and (10.1.3) of ref.~\cite{abramowitz+stegun}.}
\begin{align}
    e^{s \Delta{t}} &= \sum_{k=0}^\infty \frac{(s \Delta{t})^k}{k!} \nonumber \\
  \psi_{\infty_\pm}(r, s) &= P(r, s) \mp i Q(r, s) \qq{where} \nonumber \\
   P(r, s) &= z^{l+1} \sum_{k=0}^\infty \frac{(-z^2/2)^k }{k! (2l+2k+1)!!},\quad Q(r, s) = z^{-l} \sum_{k=0}^{\infty} \frac{(z^2/2)^k  (2l-1-2k)!! }{k!} \;.
\end{align}
Collecting the integrand as a power series in $s$, we have
\begin{align}
  e^{s\Delta{t}} \psi_{\infty_-}(r, s) \psi_{\infty_+}(r', s) &= \sum_{k=k_{\mathrm{min}}}^\infty A_k(r, r', \Delta{t}) s^k
\end{align}
and
\begin{equation}
     G'(r_*, t; r_*', t') = \frac{1}{2}\sum_{k=0}^{\infty} A_k(r, r', \Delta{t}) \delta(k, 0)=\frac{1}{2}A_0(r, r', \Delta{t}) \;.
\end{equation}
Due to the $z^{l+1}$ and $z^{-l}$ prefactors in \( P(r, s) \) and \( Q(r, s) \), only the product \( Q(r, s) Q(r', s) \) can contribute to the $s^0$ term \( A_0(r, r', \Delta{t}) \).
Now,
\begin{align}
  &\left(\sum_{k=0}^\infty \frac{s^k(\Delta{t})^k}{k!}\right) Q(r,s)Q(r',s) \nonumber \\
  =& \sum_{k_1=0}^\infty \frac{s^{k_1}(\Delta{t})^{k_1}}{k_1!}
     z^{-l} \sum_{k_2=0}^{\infty} \frac{(z^2/2)^{k_2}  (2l-1-2k_2)!! }{k_2!}
     (z')^{-l} \sum_{k_3=0}^{\infty} \frac{((z')^2/2)^{k_3}  (2l-1-2k_3)!! }{k_3!} \nonumber \\
  =& \sum_{k_1=0}^\infty \frac{(\Delta{t})^{k_1}}{k_1!} (-1)^{l} r^{-l} (r')^{-l} \times \nonumber \\
  & \sum_{k=0}^\infty (-1/2)^k s^{k_1 + 2 k - 2 l} \sum_{k_2=0}^k r^{2k_2} (r')^{2(k-k_2)} \frac{ (2l-1-2k_2)!! }{k_2!} \frac{ (2l-1-2(k-k_2))!! }{(k-k_2)!} \;.
\end{align}
The \( s^0 \) term arises from the condition \( k_1 + 2k - 2l = 0 \), or equivalently when \( k = l - \frac{k_1}{2} \).
The nonzero terms only occur when \( k_1 = 0, 2, \dots, 2l \).
Renaming \( n = \frac{k_1}{2} \) and $k_2$ to $k$, the \( s^0 \) term is:
\begin{align}
  \sum_{n=0}^{l} \frac{(\Delta{t})^{2n}}{(2n)!} (-1)^n (1/2)^{l-n}
  \sum_{k=0}^{l-n} r^{2k - l} (r')^{2(l-n-k) - l} \frac{ (2l-1-2k)!! }{k!}
  \frac{ (2n-1+2k)!! }{(l-n-k)!}.
\end{align}
This yields approximation \eqref{eq:GF_expressions}.
By the arguments in \aref{sec:causality_arguments}, this approximation is valid only if $r_* + r_*' - \Delta{t} \gg 1$.
Also, causality dictates that $G' \neq 0$ only for $\Delta{t} > \abs{r_* - r_*'}$.

One can show via algebraic manipulations that, upon replacing $r$ with $r_*$, the above expression for the Green's function is equivalent to the expression $G^{I}$ \eqref{eq:Barack_GII_expression} in ref.~\cite{Barack:1998bw} by Barack.
For example, we have, for $l=2$:
\begin{align}
  \frac12 G^{I}(u,v;u',v') &= \frac{3 r_*^4 + 2 r_*^2 ((r_*')^2 - 3 (\Delta{t})^2) + 3 ((r_*')^2 - (\Delta{t})^2)^2}{16 r_*^2 (r_*')^2} \qq{(Barack et al)} \nonumber \\
  G'(t,r_*;t',r_*') &=   \frac{3 r^4 + 2 r^2 ((r')^2 - 3 (\Delta{t})^2) + 3 ((r')^2 - (\Delta{t})^2)^2}{16 r^2 (r')^2} \qq{(this work)} \;.
\end{align}
This equivalence is not surprising, since Barack employ a $V \approx l(l+1)/r_*^2$ approximation to the Regge-Wheeler equation and solved for the corresponding Green's function.
Numerically, we found that our expression is slightly more accurate than that of Barack's, owning to a slightly more accurate approximation for $V(r_*)$.

\subsection{Approximations of $G_Q$ and $G_B$}
\label{sec:GQ_and_GB}
Various approximations of $G_Q$ and $G_B$ had been derived in the literature~\cite{Leaver:1986gd,Andersson:1996cm}.
Since we are most interested in the late time ($t-t' \gg r_*,r_*'$) and long range ($r_*, r_*' \gg 1$) regime in this work, we refer to the following well known approximations~\cite{Leaver:1986gd,Berti:2006wq}:
\begin{align}
  \label{eq:GQ_GB_expressions}
  G_Q(t,r_*;t',r_*') &\approx 
                       \Re \left[ \sum_n B_n e^{s_n (t-t'-r_*-r_*')} \right]
    \nonumber \\
  G_B(t,r_*;t',r_*') &\approx 
                       (-1)^{l+1} \frac{2(2l+2)!}{[(2l+1)!!]^2} (rr')^{l+1} (t-t')^{-2l-3} \;.
\end{align}
In $G_Q$, the sum is over poles of $G(s,r_*,r_*')$, $\omega_n = i s_n$'s are the quasinormal frequencies, and the $B_n$'s are quasinormal excitation factors. 
In $G_B$, the $(t-t')^{-2l-3}$ yields the well known Price's power law.

\section{Details of $\Psi$ calculation}
\label{sec:Psi_details}

\subsection{$\Psi_F$}
\label{sec:Psi_F_details}
Here are some examples for the $I_{l,\beta}(t,r_*)$ integral defined in \eqref{eq:Psi_F_integral_u_v}, along with their late times asymptotics.
We have assumed $w'=t_i-r_i=0$ in the arguments.
\begin{align}
  \label{eq:I_integral_examples}
  I_{0,0}(t,r_*) &= \int_{t-r}^{t+r} \frac{1}{4} \dd{z'} = \frac{r}{2} \sim \frac{r}{2} \nonumber \\
I_{0,1}(t,r_*) &= \int_{t-r}^{t+r} \frac{1}{2 (z')} \dd{z'} = \frac{1}{2} \log \left(-\frac{r+t}{r-t}\right) \sim \frac{r}{t} \nonumber \\
I_{0,2}(t,r_*) &= \int_{t-r}^{t+r} \frac{1}{(z')^2} \dd{z'} = -\frac{2 r}{r^2-t^2} \sim \frac{2 r}{t^2} \nonumber \\
I_{0,3}(t,r_*) &= \int_{t-r}^{t+r} \frac{2}{(z')^3} \dd{z'} = \frac{4 r t}{(r-t)^2 (r+t)^2} \sim \frac{4 r}{t^3} \nonumber \\
I_{1,0}(t,r_*) &= \int_{t-r}^{t+r} \frac{r^2+t ((z')-t)}{4 r (z')} \dd{z'} = \frac{\left(r^2-t^2\right) \tanh ^{-1}\left(\frac{r}{t}\right)+r t}{2 r} \sim \frac{r^2}{3 t} \nonumber \\
I_{1,1}(t,r_*) &= \int_{t-r}^{t+r} \frac{r^2+t ((z')-t)}{2 r (z')^2} \dd{z'} = \frac{t \tanh ^{-1}\left(\frac{r}{t}\right)}{r}-1 \sim \frac{r^2}{3 t^2} \nonumber \\
I_{1,2}(t,r_*) &= \int_{t-r}^{t+r} \frac{r^2+t ((z')-t)}{r (z')^3} \dd{z'} = 0 \sim 0 \nonumber \\
I_{1,3}(t,r_*) &= \int_{t-r}^{t+r} \frac{2 \left(r^2+t ((z')-t)\right)}{r (z')^4} \dd{z'} = -\frac{4 r^2}{3 (r-t)^2 (r+t)^2} \sim -\frac{4 r^2}{3 t^4}
\end{align}
See \tref{tab:nonlinear_tail_amplitudes} for more asymptotics.

\subsection{$\Psi_Q$}
\label{sec:Psi_Q_details}
With an argument similar to that in \sref{sec:Psi_analytical_approximation}, we find that the integration region relevant for $\Psi_Q$ is $t_i + r_i \leq v' \leq t - r_*$ and $-\sigma + (t_i-r_i) \leq u' \leq \sigma + (t_i-r_i)$:
\begin{align}
  \label{eq:Psi_Q_integral_u_v}
  \Psi_Q(t,r_*)
  = & \  \frac12 \int_{t_i+r_i}^{t-r_*} \int_{t_i-r_i-\sigma}^{t_i-r_i+\sigma} G_Q(t,r_*; t', r_*') Q(t',r_*') \dd{u'} \dd{v'} \;.
\end{align}
Moreover, $G_Q$ can be written in light cone coordinates as:
\begin{align}
  G_Q(t,r_*;t',r_*') &\approx 
                       \Re \left[ \sum_n B_n e^{s_n (t-v'-r_*)} \right] \;.
\end{align}
However, if we proceed with this calculation, we would find $\Psi_Q \sim t^{-\beta}$ at late times.
This would imply that $\Psi_Q$ can dominate over $\Psi_F$ at late times, which clearly contradicts the numerical results in \sref{sec:sourced_tail_numerical}.
We believe the above calculation of $\Psi_Q$ is invalid because the approximation we use for $G_Q$ is not accurate around $t - t' \approx r_* + r_*'$.
To provide further evidence for this claim, we refer to eq.~(34) of Barack \cite{Barack:1998bw}, which gives the quasinormal Green's function for a simplified model.
Barack's formula indicates that the quasinormal Green's function should not consist of merely exponentials, but also extra power law factors, which can modify the actual late time behavior of $\Psi_Q$.

\subsection{$\Psi_B$}
\label{sec:Psi_B_details}
For $\Psi_B$, we perform the integral in $r_*'$ and $t'$ coordinates.
Since $F$ is supported on $[-\sigma,\sigma]$, the integration boundary for $r_*'$ can be restricted to $t'-(t_i-r_i)-\sigma \leq r_*' \leq t'-(t_i-r_i)+\sigma$ for fixed $t'$:
\begin{align}
  \Psi_B(t,r_*) &= \int_{t_i}^t \int_{t'-(t_i-r_i)-\sigma}^{t'-(t_i-r_i)+\sigma}  G_B(t,r_*;t',r_*') \frac{F((t'-r_*')-(t_i-r_i))}{(r')^\beta} \dd{r_*'} \dd{t'}  \;.
\end{align}
Note that the only $r_*'$ dependence in the integrand are the powers $(r')^{l+1}$ and $(r')^{-\beta}$.
Assuming that $\sigma \ll r_*'$ always holds in the integration region, we can approximate $r'$ with $\eval{r'}_{r_*'=t'-(t_i-r_i)}$.
This leads to a significant simplification in the integral:
\begin{align}
  \Psi_B(t,r_*) &\approx (-1)^{l+1} \frac{2(2l+2)!}{[(2l+1)!!]^2} r^{l+1} t^{-2l-3}  K(t) \int_{-\sigma}^{\sigma} F(x) \dd{x} \qq{where} \nonumber \\
  K(t) &= \int_{t_i} \left(1-\frac{t'}{t}\right)^{-2l-3} \left(\eval{r'}_{r_*'=t'-(t_i-r_i)}\right)^{l+1-\beta}   \dd{t'} \;.
\end{align}
Here we omit the upper boundary integral in $K(t)$ since the integral is divergent at $t' = t$.
This apparent divergence is due to the fact that expression \eqref{eq:GQ_GB_expressions} for $G_B$ is not valid for $t'\approx t$.
Nevertheless, we can proceed by assuming the integral is dominated by $t' \approx t_i$, where the source is the most intense.
In the case $l+1-\beta<-1$, the integral then converges, and we have:
\begin{align}
  K(t) &
         \approx \frac{-1}{l+2-\beta} \left(\eval{r'}_{r_*'=r_i}\right)^{l+2-\beta}  \;.
\end{align}
We see that $K(t)$ is approximately independent of $t$.
Although we cannot do the same calculation for $l+1-\beta \geq -1$, we still expect a $K(t)$ roughly constant in time, and $\Psi_B \sim t^{-2l-3}$ at late times.

\section{Source terms due to cubic coupling}
\label{sec:rsh_couplings}

Real spherical harmonics (RSH) $Y_{lm} : S^2 \to \mathbb{R}$ are defined in terms of spherical harmonics $Y_l^m : S^2 \to \mathbb{C}$ via
\begin{align}
  Y_{lm} =
  \begin{cases}
    \frac{i}{\sqrt{2}} (Y_l^m - (-1)^m Y_l^{-m}) & m < 0 \\
    Y_l^0 & m = 0 \\
    \frac{1}{\sqrt{2}} (Y_l^{-m} + (-1)^m Y_l^m) & m > 0 
  \end{cases}
  \;.
\end{align}
In \sref{sec:self_interacting_scalar_field}, we chose to expand $\Psi$ in terms of RSH to avoid the need for handling reality conditions.
In our setup, these RSH modes are coupled together via eq.~\eqref{eq:cubic_scalar_Psi_lm_eqn}, with the coupling terms acting as effective sources.
The full list of leading order source terms defined in \sref{sec:coupled_simulation_analytical} are given by:
\begin{align}
  \label{eq:Q_lm_leading_order}
  Q_{00}^{(1)} & = \frac{1-r}{r^2} \left( \frac{(\Psi^{(0)}_{11})^2}{2 \sqrt{\pi }} \right) \nonumber \\
  Q_{11}^{(2)} & = \frac{1-r}{r^2} \left( \frac{(\Psi^{(1)}_{00}) (\Psi^{(0)}_{11})}{\sqrt{\pi }}-\frac{(\Psi^{(1)}_{20}) (\Psi^{(0)}_{11})}{\sqrt{5 \pi }}+\sqrt{\frac{3}{5 \pi }} (\Psi^{(1)}_{22}) (\Psi^{(0)}_{11}) \right) \nonumber \\
  Q_{20}^{(1)} & = \frac{1-r}{r^2} \left( -\frac{(\Psi^{(0)}_{11})^2}{2 \sqrt{5 \pi }} \right) \nonumber \\
  Q_{22}^{(1)} &= \frac{1-r}{r^2} \left( \frac{1}{2} \sqrt{\frac{3}{5 \pi }} (\Psi^{(0)}_{11})^2 \right) \nonumber \\
  Q_{31}^{(2)} & = \frac{1-r}{r^2} \left( 3 \sqrt{\frac{2}{35 \pi }} (\Psi^{(0)}_{11}) (\Psi^{(1)}_{20})-\sqrt{\frac{3}{70 \pi }} (\Psi^{(0)}_{11}) (\Psi^{(1)}_{22}) \right) \nonumber \\
  Q_{33}^{(2)} & = \frac{1-r}{r^2} \left( \frac{3 (\Psi^{(0)}_{11}) (\Psi^{(1)}_{22})}{\sqrt{14 \pi }} \right) \nonumber \\
  Q_{40}^{(3)} & = \frac{1-r}{r^2} \left( \frac{3 (\Psi^{(1)}_{20})^2}{7 \sqrt{\pi }}+\frac{(\Psi^{(1)}_{22})^2}{14 \sqrt{\pi }}-\sqrt{\frac{2}{7 \pi }} (\Psi^{(0)}_{11}) (\Psi^{(2)}_{31}) \right) \nonumber \\
  Q_{42}^{(3)} & = \frac{1-r}{r^2} \left( \frac{1}{7} \sqrt{\frac{15}{\pi }} (\Psi^{(1)}_{20}) (\Psi^{(1)}_{22})+\sqrt{\frac{5}{14 \pi }} (\Psi^{(0)}_{11}) (\Psi^{(2)}_{31})-\frac{(\Psi^{(0)}_{11}) (\Psi^{(2)}_{33})}{\sqrt{42 \pi }} \right) \nonumber \\
  Q_{44}^{(3)} & = \frac{1-r}{r^2} \left( \frac{1}{2} \sqrt{\frac{5}{7 \pi }} (\Psi^{(1)}_{22})^2+\sqrt{\frac{2}{3 \pi }} (\Psi^{(0)}_{11}) (\Psi^{(2)}_{33}) \right) \;.
\end{align}

\section{Relation between the Regge-Wheeler and the Zerilli Green's function}
\label{sec:darboux_transform}
It is well known that the Regge-Wheeler equation and the Zerilli equation are supersymmetric partners of each other \footnote{Similar to that of supersymmetric quantum mechanics \cite{Cooper:1994eh}.}, which is also the reason why the parity odd and even perturbations of a Schwarschild BH are {\it isospectral} \cite{Chandrasekhar:1975zza}. Obviously, the branch cuts of the Green's functions in both sector are related to each other. In this section, we utilize this relation and show that one can derive the Zerilli Green's function from the Regge-Wheeler Green's function, or the other way around. 

In Laplace space, the Regge-Wheeler/Zerilli equation for the parity odd/even sector ($\phi_+$/$\phi_-$) 
\begin{align} \label{eqn:laplRWZeqn}
   \left( s^2 -\partial_{r^*}^2 + V_{\rm RW/Z}\right)\phi_{\rm RW/Z} =0
\end{align}
can be written in the form 
\begin{align}
   &\left( s^2 - \beta + A A^\dagger \right) \phi_{RW} =0, \nonumber \\
   & \left( s^2 - \beta + A^\dagger A \right) \phi_{Z} =0,
\end{align}
where $\beta = \frac{4\lambda^2(\lambda+1)^2}{9r_s^2}$ and 
\begin{align}
   & A = \partial_{r^*}+W(r^*), \quad  A^\dagger = -\partial_{r^*}+W(r^*), \nonumber \\
   & W(r) = \frac{3 r_s(r_s-r) }{r^2 (3r_s + 2\lambda r)} - \frac{2\lambda(\lambda+1)}{3r_s} , \quad \lambda= \frac{(\ell -1)(\ell+2)}{2}.
\end{align}
Therefore it is straightforward to recognizes that the solutions for the parity even and odd variables are proportional to the Darboux transformation of each other 
\begin{align} \label{eqn:susytrans}
    \phi_{\rm Z} \propto A^\dagger \phi_{\rm RW}, \quad \phi_{\rm RW} \propto A \phi_{\rm Z} ,
\end{align}
except the algebraically special mode \cite{Chandrasekhar:1984mgh}. The Wronskian transforms as 
\begin{align}
    W[A^\dagger \phi_-, A^\dagger \phi_+] = (\beta -s^2)W[\phi_-,  \phi_+] .
\end{align}
with 
\begin{align}
    \phi_- & \sim e^{-sr^*}  \quad   {\rm for } \quad  r^* \to \infty , \nonumber \\
    \phi_+ &\sim e^{sr^*}  \quad   {\rm for } \quad  r^* \to -\infty .
\end{align}
Obviously these asymptotics are invariant under the transformation~\eqref{eqn:susytrans}. With a bit of algebra, one find that the Green's function satisfy the following identity
\begin{align}
    \tilde{G}_s(x,y) = \frac{1}{\beta-s^2}[A^\dagger_x A^\dagger_y G_s(x,y)-\delta(x-y)],
\end{align}
where
\begin{align}
    \tilde{G}_s(x,y) = \frac{\tilde{\phi}_-(x)\tilde{\phi}_+(y)\theta(x-y)+\tilde{\phi}_-(y)\tilde{\phi}_+(x)\theta(y-x)}{W[\tilde{\phi}_-,\tilde{\phi}_+]}, \quad 
    \tilde{\phi}_{\pm}(x) = A^{\dagger}_x\phi_{\pm}(x) . 
\end{align}
It is worth commenting on the new poles $s^2 =\beta$. It is possible that the original Green's function with $\phi_{\pm}$ does not have any pole at  $s^2 =\beta$, therefore the two problems are not completely isospectral. When one apply this relation between Green's functions, with $\phi \leftrightarrow \tilde{\phi}$  and $A^\dagger \leftrightarrow A$, as $G$ has no pole at $s^2 =\beta$, that means the Darboux transformation $A$ has to annihilate the solution $\tilde{\phi}$ at  $s^2 =\beta$. This is also why it is an algebraically special mode.
 
Given this relation, one can directly apply it to the Green's functions of~\eqref{eqn:laplRWZeqn} without the subtleties of the algebraically special mode. In time domain, the above relation leads becomes
\begin{align}
    \tilde{G}(t-t',x;y) = -A^\dagger_x  A^\dagger_y \int_0^{t-t'}\rd \tau \,\frac{\sinh(\sqrt{\beta} \,\tau )}{\sqrt{\beta}}G(t-t'-\tau,x;y) + \frac{\sinh(\sqrt{\beta} \,\tau )}{\sqrt{\beta}}\theta(t-t')\delta(x-y) .
\end{align}

\bibliography{main}

\end{document}